\documentclass[12pt]{amsart}
\usepackage[margin=1in]{geometry}
\usepackage{lmodern}
\usepackage{mathptmx}
\usepackage{mathtools}
\usepackage{amsthm}
\usepackage{amssymb}
\usepackage{amsrefs}
\usepackage{graphicx}
\usepackage{tikz}
	\usetikzlibrary{decorations.pathreplacing}
	\usetikzlibrary{patterns}
\usepackage{caption}
\usepackage[usestackEOL]{stackengine}
\usepackage{scalerel}
\usepackage{calc}
\usepackage{accents}
\usepackage[new]{old-arrows}
\usepackage{subcaption}

\newcommand{\decom}{{\mathcal D}}
\newcommand{\cd}{\cdot}
\newcommand{\ra}{\rightarrow}
\newcommand{\pr}{\prime}
\newcommand{\de}{\partial}

\newcommand{\R}{\mathbb{R}}

\newcommand{\lbar}[1]{\overline{#1}}
\DeclareMathOperator{\Id}{Id}

\DeclareMathOperator{\ocinv}{\mathcal{R}}

\newcommand{\dep}{D}
\newcommand{\be}{\begin{equation}}
\newcommand{\ee}{\end{equation}}

\newcommand{\Swap}{{\rm Swap}}

\newcommand{\hfn}{h}

\DeclareMathAlphabet{\mathcal}{OMS}{cmsy}{m}{n}

\def\myupbracefill#1{\rotatebox{90}{\stretchto{\{}{#1}}}
\def\rlwd{.5pt}
\newcommand\notate[4][B]{
  \if B#1\else\def\myupbracefill##1{}\fi
  \def\useanchorwidth{T}
  \setbox0=\hbox{$\displaystyle#2$}
  \def\stackalignment{c}\stackunder[-6pt]{
    \def\stackalignment{c}\stackunder[-1.5pt]{
      \stackunder[2pt]{\strut $\displaystyle#2$}{\myupbracefill{\wd0}}}{
    \rule{\rlwd}{#3\baselineskip}}}{
  \strut\kern13pt$\rightarrow$\smash{\rlap{$~\displaystyle#4$}}}
}

\newtheorem{theorem}{Theorem}[section]
\newtheorem{corollary}[theorem]{Corollary}
\newtheorem{lemma}[theorem]{Lemma}

\newtheorem{defin}[theorem]{Definition}
\newtheorem{assumption}[theorem]{Assumption}
\begin{document}

\title{The Group Structure of Quantum Cellular Automata}

\author{Michael Freedman}
\address{\hskip-\parindent
	Michael Freedman\\
    Microsoft Research, Station Q, and Department of Mathematics\\
    University of California, Santa Barbara\\
    Santa Barbara, CA 93106\\}
\email{mfreedman@math.ucsb.edu}

\author{Jeongwan Haah}
\address{\hskip-\parindent
Jeongwan Haah\\Microsoft Quantum and Microsoft Research\\ Redmond, WA 98052, USA}
\email{jwhaah@microsoft.com}

\author{Matthew B.~Hastings}
\address{\hskip-\parindent
	Matthew Hastings\\
    Microsoft Research, Station Q\\
    University of California, Santa Barbara\\
    Santa Barbara, CA 93106\\}
\email{mahastin@microsoft.com}

\begin{abstract}
We consider the group structure of quantum cellular automata (QCA) modulo circuits 
and show that it is abelian even without assuming the presence of ancillas, 
at least for most reasonable choices of control space; 
this is a corollary of a general method of ancilla removal.
Further, we show how to define a group of QCA that is well-defined without needing to use families, 
by showing how to construct a coherent family containing an arbitrary finite QCA;
the coherent family consists of QCA on progressively finer systems of qudits
where any two members are related by a shallow quantum circuit.
This construction applied to translation invariant QCA shows that
all translation invariant QCA in three dimensions 
and all translation invariant Clifford QCA in any dimension
are coherent.
\end{abstract}
\maketitle

The group of quantum cellular automata (QCA) modulo quantum circuits is an abelian group\cites{fh,hfh} (see Appendix).  While this group 
has been fully understood in one dimension using an index theory\cites{Gross_2012,fermionGNVW1,fermionGNVW2}, and 
there has been much recent progress in higher dimensions\cites{hfh,haah2019clifford}, there are still some unsatisfactory 
foundational issues.  For one issue, the group was only known to be abelian with additional ancillas present and the case without ancillas was not understood.  For another issue,  multiplication of two QCAs will necessarily increase the range of the QCA until eventually (on any finite system) some finite product of QCAs may give a QCA whose range is comparable to the 
system size, thus rendering the very idea of locality in that QCA moot.  One way of dealing with this issue is to consider infinite families of QCAs with a fixed control space and decreasing range of the QCA so that for any finite product of families all but finitely many QCA in the family will have range small compared to the system size.  However, this is not completely satisfactory since an arbitrary family may involve a sequence of QCAs that are ``unrelated" to each other; for example, various indices might fluctuate arbitrarily from one member of the family to the next.

In this paper, we present results to resolve these issues.  
First, we show how to remove ancillas (for most reasonable choices of control space) so that the group of QCA on those control spaces is abelian even without ancilla.
This is a corollary of a theorem that 
if the action of a quantum circuit is the identity on ancilla
then the action can be realized by a slightly deeper quantum circuit
whose gates are not supported on the ancilla.
Second, we show how, given a single fixed QCA, 
to create a ``coherent" family of QCA\cite{fh} containing that ``mother'' QCA.
This allows us to define a useful group structure given even a single QCA,
where the natural composition of mother QCA is compatible 
with the elementwise composition of the constructed coherent family.
The coherent family is parallel to entanglement renormalization groups of topological many-body states,
and our construction can be colloquially regarded as a way to construct a ``UV theory.''
Lastly, we examine translation invariant QCA 
that comes in an obvious family of QCA under periodic boundary conditions.
Two classes of QCA are shown to be entanglement renormalization group fixed points:
Every translation invariant Clifford QCA (one that maps a Pauli operator to a tensor product of Pauli operators)
in any dimension
and every translation invariant QCA in three dimensions
defines a coherent family under periodic boundary conditions.

\section{Definitions}\label{sec:def}
\subsection{Naive Definitions}
Let us first define a QCA, a quantum circuit, and the notion of a control space.  
We will give two definitions in each case, first a simple definition when there is a finite number of ``sites" and then a more general definition.
Here we are defining what we may term a ``naive" object, in that we will give a single QCA or circuit, rather than a family.

The overall set up is: there is a set $\{\mathcal{H}_i\}$, $i \in I$, of finite dimensional Hilbert spaces indexed by a discrete set $I$. 
The index $i$ labelling the finite dimensional Hilbert spaces is referred to as a ``site" index.
In turn, each finite dimensional Hilbert space $\mathcal{H}_i$ is a tensor product of some number of additional Hilbert spaces, termed ``qudits" associate with that site.  This terminology will be used later when discussing ancillas.

We introduce some metric ${\rm dist}(\cdot,\cdot)$ between sites.  This metric may be derived from a control space $X$ (where a control space is a fixed smooth manifold with a metric, or alternatively a simplicial complex), given a map $\mathcal{I}: I \ra X$ from sites to points in the control space.  
The control space may have boundary and may be noncompact.
We assume the map is locally finite, i.e. for $C$ compact, $C \in X$, $\mathcal{I}^{-1}(C)$ is finite.

If the set of sites is finite, then a QCA is a $*$-automorphism of the algebra of operators on this Hilbert space, subject to a locality condition explained below, and any such QCA can be expressed as conjugation by some unitary, also with a locality condition.

More generally, if the set is not finite,
we consider the net of operators on $\underset{i \in J \subset I}{\otimes} \mathcal{H}_i$, where $J$ is finite. Implicitly operators can always be extended to larger finite $J^\pr \subset I$ by tensoring with id over $J^\pr \setminus J$. The support $\mathrm{supp}(\mathcal{O})$ of an operator $\mathcal{O}$ is the smallest $J$ over which it may be written.
a QCA is a $\ast$-automorphism $\alpha$ of the associated net of Endomorphism algebras $\{\underset{i \in J}{\otimes} \mathrm{End}(\mathcal{H}_i)\}$.

In both finite and infinite cases, there is a 
a geometric condition that there is a range $R \in \R^+$ so that $\alpha$ is $R$-local, i.e. for all $\mathcal{O}$, $\mathrm{supp}(\alpha \mathcal{O}) \subset \mathcal{N}_R(\mathrm{supp}(\mathcal{O})$, the radius $R$ neighborhood.
We will say either that the QCA has range $R$, or that it is an $R$-QCA.

Next we define a circuit more precisely.
Informally, a quantum circuit is some sequence of ``gates", each being a unitary acting on a set of bounded range.
Formally, a quantum circuit is a pair consisting of a unitary and a ``circuit decomposition" of that unitary in terms of gates.
Such a circuit decomposition of a unitary $U$ consists of writing $U$ in the form
\be
\label{qcecomp}
U=U_\dep\circ U_{\dep-1} \circ \ldots \circ U_1,
\ee
where
$\dep\geq 1$ is some integer called the ``depth" of the quantum circuit, and where further
each unitary $U_a$ is written in the form
$$U_a=\prod_{S\in G_a} U_{S,a},$$
where $G_a$ is a collection of disjoint sets of sites and where $U_{S,a}$ is a unitary supported on $S$.  
Disjointedness means that the ordering within the product for $U_a$ is immaterial.
Each $U_{S,a}$ is called a gate on set $S$.
We require that the diameter of all $S\in G_a$ for all $a$ be bounded by a constant called the ``range" of the gates.
The integer $a$ is referred to as labelling the ``round" of the circuit.  The range of the circuit is defined to be the range of the gates multiplied by the number of rounds.  We say that the circuit ``implements" the unitary $U$.
Hence such a circuit acts as a QCA by conjugation: $\mathcal{O} \ra U \mathcal{O} U^\dagger$.  
Informally, then, we may say that some QCA ``is equal to a circuit" if the circuit acts as that QCA by conjugation.
We emphasize that the range of the resulting QCA is bounded by the range of the circuit, and it may in some cases be significantly smaller; see section \ref{circ}.

Finally, let us define the notion of an ancilla.
As mentioned above, the Hilbert space on each site is a tensor product of some number of Hilbert spaces termed qudits.
Each qudit will be referred to as either ``physical" or ``ancilla".
We say that a quantum circuit ``implements a unitary $V$ using ancillas" if
the quantum circuit implements some unitary $U=V\otimes I$ where $V$ acts on the physical qudits and $I$ acts on the ancillas.

\subsection{Families and Coherent Families}
QCA have a natural group structure, in that two QCA can be multiplied by composing them.  However, this composition may increase the range, so that the product of an $R_1$-QCA with an $R_2$-QCA may be an $(R_1+R_2)$-QCA.  This presents no problem if the control space is infinite.  For example, Ref.~\cites{Gross_2012} studied an infinite real line as the control space, developing an index theory.  However, if the control space has finite diameter, a finite number of compositions of QCA may give a QCA whose range is comparable to the injectivity radius of the control space itself, and all locality is lost.

One way to resolve this is to consider families of QCA on a fixed control space.  A family $\lbar{\alpha}$ is a sequence of QCA, $\alpha_i$, with $i=0,1,2,\ldots$, each with some range $R_i$, with $R_i\rightarrow 0$ are $i\rightarrow \infty$.
Note that we will use a bar to denote a family.
While all QCA in the family have the same control space, different QCA in the family may have different sets of sites.
The product of two families $\lbar{\alpha}\circ \lbar{\beta}$ is the family defined by the sequence of QCA $\alpha_i \circ \beta_i$.
Then, for any finite product of families, for all but finitely many QCA in the resulting family the range $R_i$ will be small compared to the injectivity radius of the control space.
This allows the index theory of Refs.~\cites{Gross_2012,fh} to serve as an obstruction to writing a family of QCA as a family of quantum circuits.
We define a family of quantum circuits to be a sequence of quantum circuits, with the depth of all circuits in the family bounded, and with the range of the gates in the quantum circuit tending to zero.
Such a family of quantum circuits will be called a finite depth quantum circuit (fdqc).

Remark: an alternative scaling of distance is to take all QCA in the family to have the same range $R$ but to rescale the metric on $X$ so that the injectivity radius of $X$ diverges.  This differs from the definition here simply by an overall scaling of the metric.

Remark on notion of bounded range quantum circuit: one might be tempted instead to consider families of
bounded range quantum circuits (brqc), defined as a sequence of quantum circuits over $X$ so that the range of the $i$-th quantum circuit tends to $0$ as $i\rightarrow \infty$ (the depth is allowed to diverge so long as the product of range and depth is bounded). 
More generally, even, one might consider circuits such that the range of the lightcone tends to zero (one may define the range of the lightcone in an intuitive way as the maximum range of the QCA implemented by the circuit assuming a generic choice of gates; we give a more precise definition later).
However, assuming the existence of an appropriate handle decomposition on the control space, all these notions are equivalent.  To give an intuitive idea of the equivalence, consider the case in one-dimension as shown in Fig.~\ref{1dh}.
We see here a circuit of depth $3$ with each gate having range $1$.  However, it is shown how to regroup it into a circuit of depth $2$, with each gate having range $O(1)$.  In general, assuming an appropriate handle decomposition, we can regroup a circuit to depth $d+1$ if the control space is $d$-dimensional.  We explain this in more detail in section \ref{circ}.

\begin{figure}
\includegraphics[width=6in]{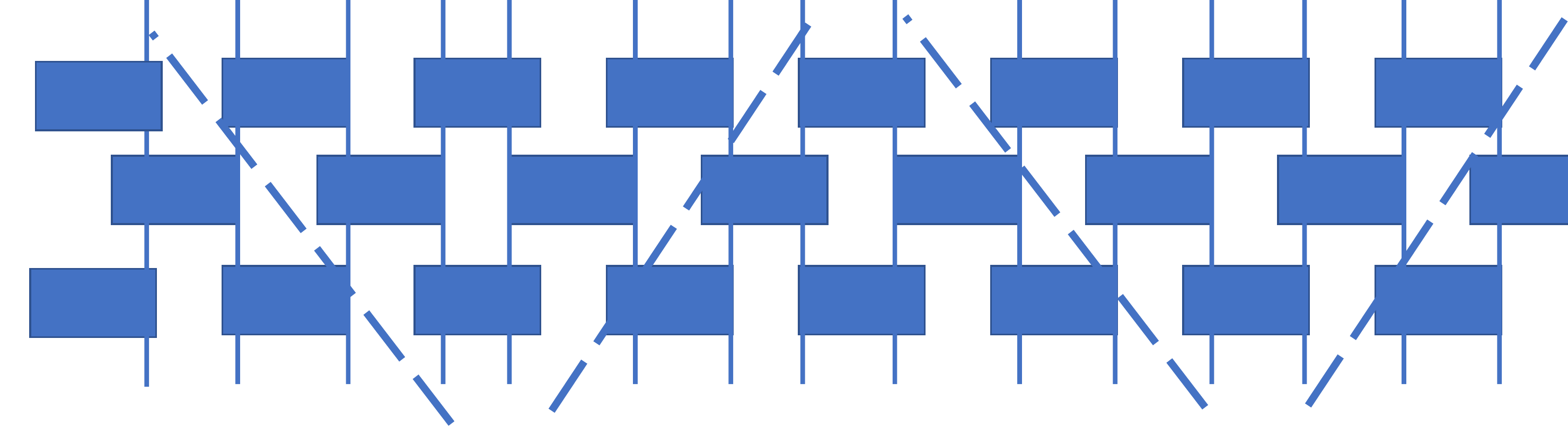}
\caption{A quantum circuit.  Horizontal axis is position, and vertical axis labels the round, increasing upwards.  Each rectangle is a gate, and lines represent qudits.  The pattern repeats indefinitely.  The dashed lines show how to combine a set of gates into a single unitary: all gates between a pair of neighboring dashed lines can be combined into a single unitary.  This rewrites the circuit as depth two but with gates having diameter $5$ in this example.}
\label{1dh}
\end{figure}

However, there is a problem with the notion of a family.  One may imagine a family composed of QCA which are ``unrelated to each other" in some sense.  For example, take the control space to be a circle and let QCA $\alpha_i$ act on a Hilbert space built from $2^i$ sites, with one qudit per site and with sites uniformly spaced around the circle so distance
$2^{-i}$ between neighboring sites.  If we choose for example $\alpha_i$ for even $i$ to be the identity QCA and $\alpha_i$ for odd $i$ to be a shift right by one, then the GNVW index fluctuates between different QCA in the family, which is undesirable.
Indeed, the QCA in the sequence could be even more arbitrary than this example; for example, it is quite possible to have the GNVW index diverge with $i$.

As a resolution,
the following definition of a ``coherent" family was proposed in Ref.~\cite{fh}
 (an earlier version called this family ``uniform").
To define 
the notion of coherent family, we first need a notion of path equivalence and stable path equivalence
 as follows:
 \begin{defin}
 Two QCA $\alpha,\beta$ are $R'$ path equivalent if there exists a continuous path of QCA with range at most $R'$ from $\alpha$ to $\beta$.
 \end{defin}
 \begin{defin}
 Two QCA $\alpha,\beta$ are stably $R'$ equivalent if one can tensor with additional ancilla degrees of freedom such that
 $\alpha \otimes \Id$ is $R'$ path equivalent to $\beta \otimes \Id$, where $\Id$ denotes the identity QCA and in this case $\Id$ acts on the additional degrees of freedom.
 Here, $\alpha,\beta$ may act on different Hilbert spaces with different sites (although they have the same control space), so that one may tensor $\alpha$ with $\Id$ acting on one set of ancillas and tensor $\beta$ with $\Id$ acting on a different set of ancillas.  There must be a one-to-one correspondence between sites acted on by $\alpha\otimes \Id$ with those acted on by $\beta\otimes \Id$, with corresponding sites having the same finite dimensional Hilbert space and the same image under the map to the control space.  In particular, sites which are ancillas for $\alpha \otimes \Id$ might not be ancillas for $\beta\otimes \Id$.
 \end{defin} 
 
 To define a coherent family,
we take $R_i=R \cdot 2^{-i}$ for some $R$ and then define:
 \begin{defin}
 \label{cohfam}
Such a family $\lbar{\alpha}$ of QCA is ``path coherent" (or simply ``coherent") if there exists a constant $c$ such that for all $i$, QCA $\alpha_i$ is stably $cR_i$ path equivalent to $\alpha_{i+1}$.
\end{defin}

 We emphasize that this definition implies also that for any $j>0$ there is some constant $c_j$ such that
for all $i$ QCA $\alpha_i$ is stably $c_jR_i$-equivalent to $\alpha_{i+j}$.
Proof: indeed, we can take $c_j=c+c/2+c/4+\ldots c/2^{j-1}\leq 2c$, by composing the circuits that map $\alpha_k$ to $\alpha_{k+1}$ for $k=i,i+1,\ldots,i+j-1$.

Further, it is clear that our assumption that $R_i=2^{-i}$ could be replaced any exponential decay, i.e., with the assumption that $R_i=x^{-i}$ for any $x>1$.
If $x>2$, then such a family is also a family of QCA with range $R_i=2^{-i}$.  If $x<2$, then one can pass to a subfamily and obtain a family with range $R_i=2^{-i}$.  So long as $R_i$ decays exponentially, then passing
to a subfamily does not affect the coherence of the family, essentially for the reasons explained in the above paragraph, taking now $c_j=c+c/x+c/x^2+\ldots+c/x^{j-1}$.

In fact, one may define an even stronger notion of coherence where QCA are related by circuits rather than by paths, as follows. 
 \begin{defin}
 Two QCA $\alpha,\beta$ are $R'$ circuit equivalent if there exists a quantum circuit of range $R'$ such that $\alpha$ is equal to $\beta$ followed by the quantum circuit, i.e., acting by conjugation by the quantum circuit implements QCA $\alpha \circ \beta^{-1}$.
 \end{defin}
 \begin{defin}
 Two QCA $\alpha,\beta$ are stably $R'$ circuit equivalent if one can tensor with additional ancilla degrees of freedom such that
 $\alpha \otimes \Id$ is $R'$ circuit equivalent to $\beta \otimes \Id$, where $\Id$ acts on the additional degrees of freedom.
 \end{defin} 
 \begin{defin}
 \label{circcohfam}
A family $\lbar{\alpha}$ of QCA is``circuit coherent" if there exists a constant $c$ such that for all $i$, QCA $\alpha_i$ is stably $cR_i$ circuit equivalent to $\alpha_{i+1}$.
\end{defin}

Note that every circuit coherent family is path coherent.  Proof: choose a path joining each gate in the circuit to the identity.

With this definition, the example above of shifts is not coherent (indeed, it is not even path coherent) but if we choose each QCA to shift right by one, then the family  is circuit coherent.

Given this definition of coherent families, the reader might be tempted to assume that one should only consider coherent families and that ``incoherent families" (i.e., those which are not coherent) are somehow artificial.  However, consider the following natural way of constructing a family.  Consider a translation invariant QCA acting on an infinite lattice of sites in a hypercubic lattice in $n$ spatial dimensions.  Then, there is an obvious way to define a family of QCA acting on a finite lattice, with the control space being the $n$-torus.  Assume the translation invariance holds for any translation by a single site, in any of the $n$ different spatial directions.  

For all sufficiently large $i$, let QCA $\alpha_i$ act on a lattice of $(m^{i})^n$ sites for some integer $m>1$.  The sites will be arranged in a hypercubic lattice within the torus, with distance $m^{-i}$ between neighboring sites.  We define $\alpha_i$ in the obvious way from the translation invariant QCA.  For example, the family of shifts where each QCA shifts right by one can be obtained from a translation invariant QCA which shifts right by one on an infinite system.
Remark: here we require that $i$ be sufficiently large, as to define the QCA we need $m^i$ large enough compared to the range of the translation invariant QCA.

Then, it is not clear whether or not this obvious choice of a family is a coherent family.  
Indeed, one can give explicit examples where the family is not expected to be coherent.  For example, in Ref.~\cite{hfh} a translation invariant QCA $\alpha_{WW}$ was constructed in three dimensions whose square is a quantum circuit but strong evidence was given that $\alpha_{WW}$ itself could not be implemented by a quantum circuit.  
A generalization of this was given in Ref.~\cite{haah2019clifford} where QCA $\gamma$ were constructed whose {\it fourth} power was a quantum circuit but it is believed that $\gamma,\gamma^2,\gamma^3$ cannot be implement by quantum circuits.
Consider a four dimensional translation invariant QCA $\beta$ given by stacking copies of $\gamma$, so that sites in the fourth dimension are labelled by some integer, and for each choice of this integer we have an independent copy of $\gamma$.  Finally, choose $m=3$.  Then, $m^i=1,3,9,\ldots$.  Under the assumption that $\gamma,\gamma^2,\gamma^3$ are not quantum circuits, this is not a coherent family.  Proof: dimensionally reduce, i.e., ignore the requirement of locality in the fourth dimension.  Then, for varying $m$ we have $1,3,9,\ldots$ copies of $\gamma$.  Mod $4$ this is $1,-1,1,-1,\ldots$ copies of $\gamma$.  However, by assumption, $\gamma$ is not related to $\gamma^{-1}$ by a quantum circuit. 

We further discuss families of translation invariant QCA in section \ref{cft} and give some sufficient conditions for the family to be coherent.

This phenomenon may be reminiscent of something that occurs in Hamiltonians which are sums of commuting projectors with topologically ordered ground states.  If one takes a toric code, then for any fixed manifold, for any 
cellulation ${\mathcal C}$, one may construct an isometry from the ground state subspace of the Hamiltonian on that cellulation ${\mathcal C}$ to the ground
state subspace of the Hamiltonian on a refinement ${\mathcal C}'$ of that cellulation by tensoring with additional
additional degrees of freedom in some product state and applying a local quantum circuit\cite{aguado2008entanglement}.  Thus, the ground states of this Hamiltonian have some similar ``coherence" property.  However, the ground state structure of the cubic code\cite{haah2014bifurcation} is much more complicated and does not have this property.  This is related to a difference in the renormalization structure of the Hamiltonian, where the toric code Hamiltonian renormalizes to itself plus topologically trivial terms but the cubic code Hamiltonian renormalizes to itself plus terms describing an additional topological phase.

\section{Ancilla Removal in Quantum Circuits}
\label{arqc}
\subsection{Overview}
As remarked before,
several authors have noted that the group of quantum cellular automata modulo quantum circuits is an abelian group in the presence of ancilla.  That is, given any two QCA $\alpha,\beta$, both with some bounded range $R$, the product $\alpha \circ \beta \circ \alpha^{-1} \circ \beta^{-1}$ can be written as a circuit with ancillas, i.e., by adding some additional ancilla degrees of freedom, we can find a quantum circuit of bounded depth and range that implements the unitary $\Bigl( \alpha \circ \beta \circ \alpha^{-1} \circ \beta^{-1}\Bigr) \otimes I$, where the tensor product is between the physical and ancilla degrees of freedom, so the circuit acts as $\alpha \circ \beta \circ \alpha^{-1} \circ \beta^{-1}$ on the physical degrees of freedom and as the identity on the ancilla degrees of freedom.

This raises the question: if we do not allow ancillas, is the group still abelian?  We answer this question in the affirmative (at least for certain choices of control space) by showing a more general result: for certain choices of control space (defined below), given any circuit that acts trivially on the ancilla degrees of freedom, there is a circuit implements the same unitary on the physical degrees of freedom without using any ancilla degrees of freedom.  The latter circuit has an increased depth and range, but they are bounded by some function of the depth and range of the original circuit.

In this note qudits will be assumed to have the same dimension.  Given the construction for ancilla removal below, the reader will see how to generalize it to cases with varying dimension though (assumption \ref{a1} below must be generalized so that it holds separately for each prime dividing the dimension of the qudits).

The basic idea of the construction involves taking a given decomposition of the circuit and constructing a new decomposition.
A trivial example of this is that
given any circuit, we can increase the depth by $1$ by taking one of the $U_a$ to be the identity operator, i.e., given a circuit with given $U_a$ and given depth $\dep$, we could define a new circuit $U'$ with depth $\dep+1$ and corresponding unitaries $U'_a$ with $U'_a=U_a$ for $a\leq \dep$ and $U_{\dep+1}=I$.
However, more interesting and useful choices of decomposition are possible.
We will explain how, given a circuit, to change the decomposition so that in each round, the density of the gates decreases (roughly meaning that only a small fraction of the qudits are in the support of any gate; we make this more precise later), at the cost of increasing the depth of the circuit.  We use this modified decomposition to show how it is possible to remove ancillas from a quantum circuit, turning a circuit that uses ancillas into one without ancillas, at the cost of increased depth and under certain assumptions on the ancilla density.

\subsection{Simplest Construction}
The basic idea of the decomposition that we will use can be seen in the following example that we present pictorially.  
After giving this example, we give a more general construction with formal definitions.
We consider a one-dimensional system.  
Sites are labelled by integers, and the distance between sites is simply the absolute value of the difference between the corresponding integers (modulo some large integer if one wants to take a finite periodic system).  All sites have two qudits, one physical and one ancilla.  (It should be apparent that here we have scaled the distance between sites to $1$ for notational simplicity; this is a different scaling than taking the control space to have a fixed size with distance between sites small.)

We consider a circuit with some bounded depth.
Imagine writing gates of the circuit in a two dimensional array, with horizontal axis labelling the site coordinate, and vertical axis labelling the round of the gate.  Now consider the decomposition of
Fig.~\ref{isfig}.  
The numbers $1,2,3,4$ represent ``blocks" of qudits, each containing a number of qudits which is some constant multiple times the range of the circuit.
The individual gates are not shown in this figure.  Rather, each block labelled $A,B,C,D$ contains many gates horizontally and vertically.  Each gate is entirely contained within a single block of gates; the angled line represents a ``lightcone".

\begin{figure}
\includegraphics[width=6in]{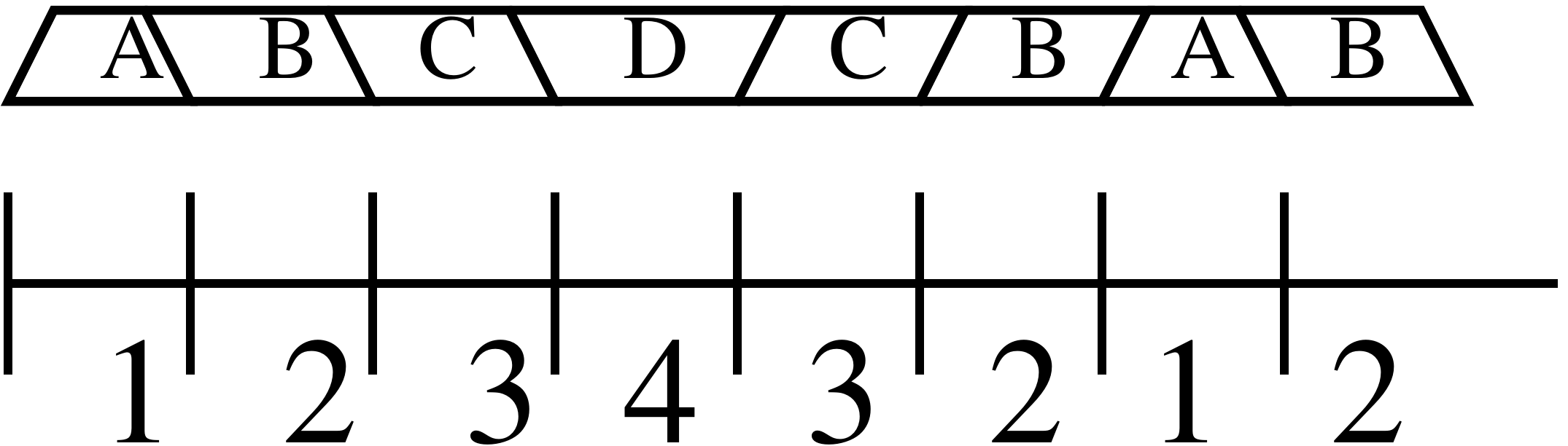}
\caption{Horizontal axis is position.  Vertical axis labels the round, increasing upwards.  The bottom line labelled by numbers represents the qudits, with each number $1,2,3,4$ representing a set of qudits, while the top of the figure represents gates.
The figure repeats, so that to the right of the last box labelled 2 is another box labelled 3, then 4, then 3,2,1, and so on, and similarly to the right of the last box labelled B are boxes labelled C,D,C, and so on.  }
\label{isfig}
\end{figure}

Let $V_a$ for $a\in\{A,B,C,D\}$ be the product of the unitaries in all blocks labelled $a$.  Then, the circuit is equal to the product $V_D V_C V_B V_A$.  This product gives a different decomposition of the circuit, in which gates in blocks labelled $D$ are done in a later round than they would be in the original decomposition.  The depth of the circuit is increased by at most a factor of $4$.

This new decomposition gives rise to what one may pictorially describe as a staircase or a zig-zag: if, as a function of the coordinate of sites, one considers those rounds over which some gate is acting on that site, these will increase and decrease periodically in a zig-zag, with sites in a block labelled $4$ having gates at the latest rounds (i.e., they are in gates in blocks $D$) and those in a block labelled $1$ having gates in the earliest rounds (i.e., in gates in blocks $A,B$).

We then use an idea of ``borrowing" to remove ancillas: for each gate acting on an ancilla qudit in block $1$, we replace its action on that ancilla with the same action on a physical qudit in block $3$.  We similarly replace the action on ancillas in block $2$ with physical qudits in block $4$, ancillas in block $3$ with physical qudits in block $1$, and ancillas in block $4$ with physical qudits in block $2$.
Precisely, we construct a one-to-one pairing between ancilla qudits in block $1$ with physical qudits in block $3$, and similarly between ancillas in block $2$ with physical qudits in block $4$, and so on.  Such a pairing exists because we will take all blocks the same size.  
We pair each ancilla qudit in block $1$ with a physical qudit in a neighboring block $3$ (for example, the block $3$ immediately to the right).
Then, we conjugate $V_A$ by
a swap gate which swaps the ancillas in blocks $1$ with the corresponding physical qudits in blocks $3$.  
Similarly, we conjugate $V_B$ by a swap which swaps the ancillas in block $1$ with physical qudits in block $3$ and swaps the ancillas in block $2$ with physical qudits in blocks $4$; note crucially that $V_B$ does not act on the physical qudits in blocks $3$ or $4$.
We conjugate $V_C$ by a swap which swaps the ancillas in block $2$ with physical qudits in block $4$ and swaps the ancillas in block $3$ with physical qudits in block $1$.  We conjugate $V_D$ by a swap which swaps the ancillas in block $3$ with physical qudits in block $1$ and swaps the ancillas in block $4$ with physical qudits in block $2$.

Then one may show under the assumption that the original circuit acted trivially on the ancillas, that this replacement gives a new circuit that implements the same unitary.
We call this ``borrowing" because one ``borrows" the physical qudit in block $3$ to act as an ancilla for block $1$, and then returns it unchanged so later it can be used as a physical qudit.
The qudit is returned unchanged because the original circuit acted as the identity on the ancilla qudits.
Since the blocks have width that is proportional to the range of the quantum circuit, this requires increasing the range of the gates in the quantum circuit only by a factor proportional to the range of the whole circuit.

A similar kind of decomposition can be used when the control space is more than one-dimensional, so long as it has the form of a one-dimensional line ${\mathbb R}$ times some arbitrary space $X^\perp$ and so long
there is a certain symmetry under translation along the one-dimensional line.  Label coordinates in the control space
by a pair $(x,y)$ with $x\in {\mathbb R}$ and $y\in X^\perp$.  Then, we require that for all $x,y$, there be a site at $(x,y)$ if and only if there is one at $(x+1,y)$, and that both such sites (if they exist) have the same Hilbert space dimension.  If these conditions hold,
one simply uses the coordinate along the one-dimensional line to define the block decomposition.  One picks the swap gates to swap qudits at the same value of coordinate in the space $X^\perp$ but shifted coordinate along the one-dimensional line.

If we consider the more general setting in which only some of the sites have physical qudits, we can still make such a borrowing construction, given an assumption on the density of the physical qudits.  For example, suppose that sites labelled by integers equal to $0$ mod $K$ for some integer $K>0$ have one physical qudit, and all other sites have one ancilla qudit so that only a fraction $1/K$ of the qudits are physical. 
One way to proceed here is to iterate the above construction, removing half of the ancilla qudits at each step, i.e., suppose for simplicity that $K$ is a power of $2$ (if it is not, one may tensor with the identity on additional ancillas so that it is).  Consider a new system with one physical qudit and one ancilla qudit on each site, where the physical qudit is the tensor product of the original physical qudit and $K/2-1$ of the ancillas, and the ancilla is the tensor product of the remaining $K/2$ ancillas.  Applying the above construction, one can remove  half of the original qudits, and then repeat until no
ancillas are left.

An alternative way to handle the case in which some sites have no physical qudits is to construct a more complicated way of borrowing physical qudits to replace ancillas.  Such a more general construction (both in one-dimension and on more general control spaces) will be called a ``borrowing function" below.

\subsection{General Construction}
We now give a more general construction for other control spaces.
Given a quantum circuit $U$ and some decomposition of that circuit, and some pair $(S,a)$ where $S$ is a set and $a$ is an integer, define the ``forward lightcone" of $(S,a)$ inductively as follows.  It is the smallest set of pairs which contains $(S,a)$ and such that
if $U_{T,b}$ and $U_{T',b'}$ are in the decomposition of the circuit and $b'> b$ and $T'\cap T \neq \emptyset$ then $(T',b')$ is in the forward lightcone whenever $(T,b)$ is.

Given some decomposition of a quantum circuit, and given some function $\tau(\cdot)$ from pairs $(S,a)$ to integers, we say that this function $\tau(\cdot)$ (which we call a ``time function") is ``causal"
if given any pair of gates $U_{S,a}$ and $U_{T,b}$ with $b>a$ and $S\cap T \neq \emptyset$, then
$\tau(T,b)>\tau(S,a)$.
Note that for any causal time function, it follows that
given any pair of gates $U_{S,a}$ and $U_{T,b}$ with $(T,b)$ in the forward lightcone of $(S,a)$ we have that $\tau(T,b)>\tau(S,a)$.  Proof: if $(T,b)$ is in the forward lightcone of $(S,a)$ then there is some sequence $(S_1,a_1),(S_2,a_2),\ldots,(S_k,a_k)$ with $b>a_k>\ldots>a_1>a$ and $T\cap a_k\neq \emptyset$ and $a_{k}\cap a_{k-1} \neq \emptyset$ and $a_1\cap S\neq \emptyset$, and then the claim follows inductively.

We will say that a gate $U_{S,a}$ ``acts at time $t$" if $\tau(S,a)=t$.

We now show that
\begin{lemma}
\label{Vbdef}
Given some decomposition $\decom$ of a unitary $U$ and given some causal function $\tau$ with the range of $\tau$ being $1,\ldots,\tau_{max}$ for some arbitrary integer $\tau_{max}\geq 1$, then
$U=V_{\tau_{max}} \circ \ldots \circ V_1,$
where
\be
V_b=\prod_{(S,a) \; {\rm s. t.} \; \tau((S,a))=b} U_{S,a}.
\ee
That is, $V_b$ is the product of $U_{S,a}$ over all gates $S,a$ in the decomposition $\decom$ such that $\tau(S,a)=b$.
\begin{proof}
The proof is inductive in $\tau_{max}$.  
The base case $\tau_{max}=0$ is trivial since then $U$ is the identity and has no gates in its circuit decomposition.

Let $U$ have depth $\dep$.
Define $\tilde U$ by
$\tilde U=\tilde U_\dep\circ \ldots \circ \tilde U_1$, where
$\tilde U_a=\prod_{S\in G_a, \, \tau((S,a))<\tau_{max}} U_{S,a}.$
In words, $\tilde U$ has the same circuit decomposition as $U$, except that we remove all gates $U_{S,a}$ with $\tau((S,a))=\tau_{max}$, and $\tilde U_a$ is the same as $U_a$ except that we remove all gates $U_{S,a}$ with $\tau((S,a))=\tau_{max}$.

By assumption,
 there is no gate $U_{T,b}$ that is in the forward lightcone of any of the gates in $V_{\tau_{max}}$.
Hence, in the decomposition $U=U_\dep \circ \ldots \circ U_1$, the gates $U_{S,a}$ with $\tau((S,a))=\tau_{max}$ can be commuted to the left of all other gates, which shows that $U=V_{\tau{max}} \tilde U$.
By the inductive assumption, 
$\tilde U=V_{\tau_{max}-1} \circ \ldots \circ V_1$ since $\tau(\cdot)$ is also a causal time function for $\tilde U$.
Hence, $U=V_{\tau_{max}} \circ \ldots \circ V_1$.
\end{proof}
\end{lemma}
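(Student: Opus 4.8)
The plan is to induct on $\tau_{max}$, peeling off the highest time-slice at each step. Before starting the induction I would first check that the factors $V_b$ are even well defined, i.e. that gates sharing a value of $\tau$ pairwise commute so that the product defining $V_b$ is order-independent. Given two gates $U_{S,a}$ and $U_{T,b}$ with $\tau((S,a))=\tau((T,b))$, either $a=b$, in which case $S,T\in G_a$ are disjoint by the definition of a round, or (say) $a<b$, in which case $S\cap T=\emptyset$ as well, since otherwise causality would force $\tau((T,b))>\tau((S,a))$. Disjoint-support unitaries commute, so each $V_b$ is unambiguous.

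For the inductive step I would write $U=U_\dep\circ\cdots\circ U_1$ as a single product of individual gates ordered by round, and slide every gate at time $\tau_{max}$ to the far left. The claim is that any gate $U_{S,a}$ with $\tau((S,a))=\tau_{max}$ commutes with every gate $U_{T,b}$ standing to its left, i.e. with $b>a$. This is the one place the causality hypothesis does real work: if such a $U_{T,b}$ had $S\cap T\neq\emptyset$, then causality would demand $\tau((T,b))>\tau((S,a))=\tau_{max}$, which is impossible since $\tau_{max}$ is maximal. Hence no later gate overlaps $S$, so $U_{S,a}$ is supported away from, and therefore commutes with, every gate applied after it. Because the time-$\tau_{max}$ gates also commute among themselves (by the same argument), they can all be moved simultaneously to the left, giving $U=V_{\tau_{max}}\circ\tilde U$, where $\tilde U$ is obtained from $\decom$ by deleting every gate at time $\tau_{max}$.

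To close the induction I would note that $\tilde U$ inherits a circuit decomposition, that the restriction of $\tau$ to $\tilde U$ is still causal, and that its image now lies in $\{1,\ldots,\tau_{max}-1\}$; applying the inductive hypothesis to $\tilde U$ yields $\tilde U=V_{\tau_{max}-1}\circ\cdots\circ V_1$, and composing with $V_{\tau_{max}}$ gives the result. The base case $\tau_{max}=1$ is immediate: then any two overlapping gates would have to lie in the same round (otherwise the strict increase of $\tau$ is violated), so all gates pairwise commute and $U=V_1$.

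The hard part will be the simultaneous commutation of all maximal-time gates to the left. The subtlety is not any single transposition, since disjoint-support unitaries obviously commute, but rather confirming that the causality hypothesis rules out every obstructing overlap at once, uniformly over all the gates being moved. The forward-lightcone reformulation recorded just before the lemma is exactly what keeps this bookkeeping clean, since it upgrades the pairwise causality condition to a statement about all later gates reachable in the lightcone; and it is worth double-checking that ``to the left'' correctly refers to the later-applied rounds $b>a$ rather than to same-round gates, which commute in any case.
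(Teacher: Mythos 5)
Your proof is correct and follows essentially the same route as the paper's: induction on $\tau_{max}$, peeling off the gates at the maximal time value and commuting them to the left using the causality hypothesis, then applying the inductive hypothesis to the remaining circuit. Your additions---the explicit check that each $V_b$ is well defined (order-independent) and the direct pairwise-causality argument in place of the paper's forward-lightcone phrasing---are sound refinements of the same argument rather than a different approach.
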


We claim that
\begin{lemma}
\label{tfun}
Let $\hfn: X\ra {\mathbb R}$ be a Lipschitz function with Lipschitz constant $K$.
Given such a function, define $\hfn(S)$ for any set $S$ to equal ${\rm min}_{x\in S} \hfn(x)$.

Then,
given a quantum circuit where the gates have range at most $r$, and given any set of sites $B$,
the time function
$\tau(S,a)=\lfloor \hfn(S) \rfloor+c a$
is causal for any $c\geq rK+1$.
\begin{proof}
Consider any pair of gates $U_{S,a}$ and $U_{T,b}$ with $b>a$ and $S\cap T \neq \emptyset$.
Then, by the Lipschitz condition, $\hfn(S)\leq \hfn(T)+rK$.
Hence, $\lfloor \hfn(S) \rfloor < \lfloor \hfn(T) \rfloor + rK+1$.
Hence for $c\geq rK+1$, $\lfloor \hfn(S)\rfloor+c a<\lfloor \hfn(T)\rfloor +c b$.
\end{proof}
\end{lemma}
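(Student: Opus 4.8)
The plan is to check the defining property of causality directly. I take an arbitrary overlapping pair of gates $U_{S,a}$ and $U_{T,b}$ with $b>a$ and $S\cap T\neq\emptyset$, and I aim to show $\tau(T,b)>\tau(S,a)$, that is $\lfloor \hfn(T)\rfloor+cb>\lfloor \hfn(S)\rfloor+ca$. The argument rests on two independent estimates: a spatial comparison of $\hfn(S)$ and $\hfn(T)$ supplied by the Lipschitz hypothesis, together with the temporal gap $cb-ca\geq c$ forced by $b>a$. Choosing $c$ at least $rK+1$ is what makes the guaranteed temporal gap outweigh the worst spatial discrepancy.

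First I would prove the spatial bound $\hfn(S)\leq\hfn(T)+rK$. Since the gates overlap, I fix a site $p\in S\cap T$. As $\hfn(S)=\min_{x\in S}\hfn(x)$ and $p\in S$, this gives $\hfn(S)\leq\hfn(p)$. Let $x_T\in T$ realize $\hfn(T)=\hfn(x_T)$. Because the gates have range at most $r$, the support $T$ has diameter at most $r$, so ${\rm dist}(p,x_T)\leq r$; the Lipschitz condition then yields $\hfn(p)\leq\hfn(x_T)+K\,{\rm dist}(p,x_T)\leq\hfn(T)+rK$. Chaining these gives $\hfn(S)\leq\hfn(T)+rK$. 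The key structural point is that defining $\hfn(S)$ as a minimum produces exactly the one-sided comparison one needs (the earlier gate should receive the smaller time value), and that ``range at most $r$'' is precisely what bounds the diameter of each gate, letting us compare the two minima across the common site $p$.

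Next I would transfer this to floors and close the argument. From $\hfn(S)\leq\hfn(T)+rK$ and $\hfn(T)<\lfloor\hfn(T)\rfloor+1$, I get $\lfloor\hfn(S)\rfloor\leq\hfn(S)\leq\hfn(T)+rK<\lfloor\hfn(T)\rfloor+rK+1$, hence the integer inequality $\lfloor\hfn(S)\rfloor<\lfloor\hfn(T)\rfloor+rK+1$. Adding $ca$ gives $\tau(S,a)<\lfloor\hfn(T)\rfloor+ca+rK+1$. On the other hand $b\geq a+1$ forces $cb\geq ca+c$, so $\tau(T,b)=\lfloor\hfn(T)\rfloor+cb\geq\lfloor\hfn(T)\rfloor+ca+c$. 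For any $c\geq rK+1$ the right-hand side is at least $\lfloor\hfn(T)\rfloor+ca+rK+1>\tau(S,a)$, which is the desired strict inequality $\tau(T,b)>\tau(S,a)$.

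I do not expect a serious obstacle, as the argument is elementary; the only point needing care is the interaction of the floor function with the real-valued Lipschitz estimate. The extra ``$+1$'' produced by discretizing $\hfn$ into $\lfloor\hfn\rfloor$ is exactly why the threshold is $c\geq rK+1$ rather than the naive $c\geq rK$; keeping track of this additive unit, and checking that a single step $b\mapsto b+1$ in the round index always supplies a temporal increment of $c$ large enough to dominate it, is the one place where the constant must be chosen sharply.
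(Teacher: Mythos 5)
Your proof is correct and follows essentially the same route as the paper's own argument: the Lipschitz estimate $\hfn(S)\leq\hfn(T)+rK$ across a common site, the transfer to floors which costs the extra $+1$, and the choice $c\geq rK+1$ so that the temporal increment $cb-ca\geq c$ dominates. You merely make explicit the details the paper leaves implicit (the common point $p$, the minimizing point of $\hfn$ on $T$, and the use of $b\geq a+1$), which is a faithful filling-in rather than a different approach.
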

We will call such a function $\hfn(\cdot)$ a ``height" function.
Remark: an example of such a height function that we will use later is
$\hfn(x)= {\rm dist}(x,B)$, where $B$ is any set of points in the control space.  This function has Lipschitz constant $1$.

We make one heuristic remark on this construction.  This time function $\lfloor  \hfn(S) \rfloor+c a$ can be understood as defining a new time depending both on the original ``time", i.e. $a$, as well as the position in ``space", i.e., the function $\hfn(S)$.  This is somewhat reminiscent of coordinate transformations in special relativity, in which the time in a moving reference frame depends linearly on time and space in another reference frame.  Here, we find some range of possible $a$ for which the time function is causal, reminiscent of the fact that in special relativity, only certain linear transformations will preserve timelike intervals.
In the same spirit we remark that the computer science tradition of treating space and time complexities separately
must at some point encounter special relativity.

Now we describe a more general form of ``borrowing".
Given a circuit $C$ implementing unitary $U$, with a causal time function $\tau$, let a ``borrowing function" $f(\cdot)$ be a function from sites to sites with the following properties.
First, the range of the borrowing function is a subset of the set of sites with a physical qudit.
Next,
for any site $i$, for each site $j$ such that $f(j)=i$, let $S_j(i)$ be the set of $(S,a)$
such that there is a gate $U_{S,a}$ with $j \in S$.  
Let also $S_i(i)$ 
be the set of $(S,a)$
such that there is a gate $U_{S,a}$ with $i \in S$.  
Let $T_j(i)$ be the image $S_j(i)$ under $\tau$, i.e., in words $T_j(i)$ for is the sets of times at which some gate acts on $j$.
Then, for a borrowing function we require that for all $i$, for any pair $j,k\neq i$ with $f(j)=f(k)=i$ with $j\neq k$, we have that
either $T_j(i)<T_k(i)$ or $T_k(i)<T_j(i)$ where the inequality $T_j(i)<T_k(i)$ means that every element of $T_j(i)$ is smaller than every element of $T_k(i)$.
In words, we require that either all gates with support on $j$ act before all gates with support on $k$ or vice-versa.
Further we require that for any $j\neq i$ that either $T_j(i)<T_i(i)$ or $T_j(i)>T_i(i)$.  In words, we require that all gates with support on $j$ act before all gates with support on $i$, or vice-versa.

From such a borrowing function, we can construct a new circuit $C'$ as follows: for every gate $U_{S,a}$ in $C$, we interchange every ancilla qubit on site $j$ in its support with the physical qubit on site $f(j)$.
That is, define $\sigma_{i,j}$ to swap the physical degree of freedom on site $i$ with the ancilla degree of freedom on site $j$ and 
define
\be
U'_{S,a}\rightarrow \Bigl(\prod_{j\in S}\sigma_{f(j),j}\Bigr) U_{S,a} \Bigl( \prod_{j\in S} \sigma_{f(j),j}\Bigr),
\ee
with $U'_{S,a}$ the gates in circuit $C'$.
The gates $U'_{S,a}$ do not act on the ancilla qubits, so we can then define in the obvious way a circuit $C''$ that acts on a system that includes only physical degrees of freedom, implementing unitary $U''$ with $U'=U'' \otimes I$.

We now show that
\begin{lemma}
If $U$ acts trivially on the ancilla degrees of freedom, then
$U=U'$.
\begin{proof}
Let there be $N$ sites.
Let $\Lambda_1,\ldots,\Lambda_N$ be a sequence of subsets of sites, with $|\Lambda_m|=m$ and each $\Lambda_i$ obtained from $\Lambda_{i-1}$ by a adding a single site.
We prove the lemma by considering a sequence of circuits $C_0,C_1,C_2,\ldots,C_N$ with $C=C_0$ and $C_N=C'$ and showing that $C_m$ and $C_{m-1}$ implement the same unitary for each $m>0$.

We define $C_m$ to have gates $U^{(m)}_{S,a}$ with
\be
U^{(m)}_{S,a}=\Bigl(\prod_{j\in S\cap \Lambda_m}\sigma_{f(j),j}\Bigr) U_{S,a} \Bigl( \prod_{j\in S\cap \Lambda_m} \sigma_{f(j),j}\Bigr).
\ee
In words, in each circuit $C_m$ we make the swap of ancilla qubit on site $j$ with physical qubit on site $f(j)$ only for those $j\in \Lambda_m$.

Let $C_m$ implement unitary $U_m$.  We will show that $U_m=U_{m-1}$.
Let $\Lambda_m \setminus \Lambda_{m-1}={j}$ with $i=f(j)$.
Let $t_{min}$ be the smallest element of $T_j(i)$ and $t_{max}$ be the largest element.

Define $V^{(m-1)}_b
=\prod_{(S,a) \; {\rm s. t.} \; \tau((S,a))=b} U^{(m-1)}_{S,a}$,
 as in lemma \ref{Vbdef}.
By assumption, the product $V^{(m-1)}_{t_{max}} \circ \ldots \circ V^{(m-1)}_{t_{min}}$ acts trivally on the ancilla qudit on site $j$, since all gates that act on this ancilla are in this product.  Further, by assumption, 
 the product $V^{(m-1)}_{t_{max}} \circ \ldots \circ V^{(m-1)}_{t_{min}}$ acts trivally on the physical qudit on site $i$
 since either $T_i(i)<T_j(i)$ or $T_i(i)>T_j(i)$.
 Hence, 
 \begin{eqnarray}
V^{(m-1)}_{t_{max}} \circ \ldots \circ V^{(m-1)}_{t_{min}} &=&
\sigma_{f(j),j)} V^{(m-1)}_{t_{max}} \circ \ldots \circ V^{(m-1)}_{t_{min}} \sigma_{f(j),j}
\\ \nonumber
&=& V^{(m)}_{t_{max}} \circ \ldots \circ V^{(m)}_{t_{min}}.
\end{eqnarray}
Hence $U_{m}=U_{m-1}$.
\end{proof}
\end{lemma}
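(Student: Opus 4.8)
The plan is to prove $U=U'$ by a hybrid (telescoping) argument that inserts the borrowing swaps one site at a time, mirroring the structure of the two preceding lemmas. First I would fix an enumeration of the sites and build intermediate circuits $C_0=C,C_1,\dots,C_N=C'$, where $C_m$ applies the swap conjugation $\sigma_{f(j),j}$ only for the sites $j$ in an increasing family $\Lambda_m$, with $\Lambda_m$ obtained from $\Lambda_{m-1}$ by adjoining a single new site. Passing from $C_{m-1}$ to $C_m$ only re-conjugates those gates whose support contains the single new site $j=\Lambda_m\setminus\Lambda_{m-1}$, so it suffices to show each consecutive pair implements the same unitary; the full equality then telescopes.

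For the single-site step, set $i=f(j)$ and let $[t_{min},t_{max}]$ be the smallest interval of $\tau$-values containing $T_j(i)$, the set of times at which gates touch $j$. Using the decomposition $U=V_{\tau_{max}}\circ\cdots\circ V_1$ of Lemma \ref{Vbdef}, I would isolate the ``window'' product $W_{m-1}=V^{(m-1)}_{t_{max}}\circ\cdots\circ V^{(m-1)}_{t_{min}}$ formed from the gates of $C_{m-1}$. Because every gate touching site $j$ lies in this window, the circuits $C_{m-1}$ and $C_m$ agree on all gates outside it, and since conjugation distributes over products, the desired equality $U_m=U_{m-1}$ reduces to showing $\sigma_{i,j}\,W_{m-1}\,\sigma_{i,j}=W_{m-1}$, i.e. that conjugating the window product by the swap leaves it unchanged.

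The crux is to establish that $W_{m-1}$ acts as the identity on \emph{both} degrees of freedom exchanged by $\sigma_{i,j}$ — the ancilla on $j$ and the physical qudit on $i$ — from which commutation with $\sigma_{i,j}$ is immediate. Triviality on the physical qudit at $i$ is where the borrowing hypothesis enters: the condition that either $T_j(i)<T_i(i)$ or $T_j(i)>T_i(i)$ forces every gate touching $i$ to lie strictly outside the window, and the analogous total-ordering condition on distinct preimages of $i$ guarantees that none of the swaps $\sigma_{f(j'),j'}$ already present in $C_{m-1}$ drags the physical qudit at $i$ into any window gate; hence no gate of $W_{m-1}$ touches physical-$i$ at all. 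Triviality on the ancilla at $j$ is where the global hypothesis that $U$ acts trivially on ancillas is used: since ancilla-$j$ is untouched outside the window, I would write $U=A\circ W_{m-1}\circ B$ with $A,B$ not touching ancilla-$j$, and observe that unitaries of the form $(\cdot)\otimes I$ on the ancilla-$j$ factor are closed under multiplication and inversion, so $W_{m-1}=A^{-1}\,U\,B^{-1}$ is again of this form, hence the identity on ancilla-$j$.

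I expect the main obstacle to be the bookkeeping of swaps sharing a common target: two sites $j,k$ with $f(j)=f(k)=i$ both involve the physical qudit at $i$, and conjugation by $\sigma_{i,j}$ and $\sigma_{i,k}$ need not commute. The one-site-at-a-time hybrid is precisely what sidesteps this, since only one swap is inserted at each stage; the borrowing definition's requirement that $T_j(i)$ and $T_k(i)$ be totally time-ordered is exactly what keeps the relevant windows disjoint, so that the previously inserted swaps are inert inside the current window. Verifying this non-interference carefully — that no earlier swap contaminates physical-$i$ or ancilla-$j$ within $[t_{min},t_{max}]$ — is the delicate part of the argument.
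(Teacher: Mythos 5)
Your proposal is correct and follows essentially the same route as the paper's own proof: the same one-site-at-a-time hybrid interpolation $C_0=C,\dots,C_N=C'$, the same window product $V^{(m-1)}_{t_{max}}\circ\cdots\circ V^{(m-1)}_{t_{min}}$ built from Lemma \ref{Vbdef}, and the same conclusion that triviality of the window on both the ancilla at $j$ and the physical qudit at $i$ makes conjugation by $\sigma_{f(j),j}$ inert. If anything, you are slightly more explicit than the paper on two points it leaves implicit --- why the global triviality of $U$ on ancillas transfers to the window product, and why swaps already inserted for other preimages of $i$ cannot drag the physical qudit at $i$ into the window --- both of which are exactly the roles of the total-ordering conditions in the definition of a borrowing function.
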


Thus, we have shown that the existence of a causal time function and a borrowing function enables us to remove the ancillas.
Now we give some conditions under which, given some circuit, we can construct a causal time function and a borrowing function.
The goal is to remove ancillas without increasing the circuit depth or range of the gates by an excessive amount.

We make the following assumption on the control space and sites.  The assumption is phrased in terms of the number of physical and ancilla qudits, so they may be applied to the setting in which each site has one physical and one ancilla qudit or to a more general setting with a lower density of physical qudits.
\begin{assumption}
\label{a1}
There are constants $0<c\leq c'$ and $d>0$
and there is some $\ell_0$ which 
depends only on the range and depth of the quantum circuit and on $c,c',d$,
such that:
for any $\ell\leq \ell_0$, the number of physical qudits within distance $\ell$ of $x$ is at least
$\lfloor c\ell^{d} \rfloor$ and further the number of ancilla
qudits with distance from $x$ in the interval $[\ell-1,\ell]$ is at most
$\lfloor c'\ell^{d-1} \rfloor$.
\end{assumption}
Remark: note that the lower bound on the number of physical qudits is on the number contained in some ball of radius $\ell$ while the upper bound on the number of ancilla is on some ``shell" of thickness $1$.

It is not clear to us how much this assumption can be generalized.  While the assumption holds for many reasonable choices of control space and control map, it is likely that for other control spaces borrowing functions can still be constructed.  A particularly interesting case to consider would be a hyperbolic space with a uniform density of qudits (uniform on some scale) and with the range of the circuit large compared to the curvature radius.  Our construction given Assumption \ref{a1} will use a particularly simple choice of function $\hfn$ to construct a causal time function.  After giving this construction, we will discuss other possible choices of function $\hfn$ and explain how to construct a borrowing function on the hyperbolic plane.

We consider a quantum circuit whose gates have range $r$ and with the quantum circuit having depth $\dep$.
Recall that an $\epsilon$-net is a set of points
 $x_1,x_2,\ldots$, such that every
point in the space is within distance $\epsilon$ of some point in the $\epsilon$-net and no two distinct points in the sequence are distance less than $\epsilon$ of each other.
We construct a $\ell_0$-net for some $\ell_0$ that we pick later.
We choose the time function $\tau(S,a)=\lfloor {\rm dist}(S,B)\rfloor+ca$ with $c=r+1$ with $B$ being the set of points in the net.  By lemma \ref{tfun}, this time function is causal.

We now construct a borrowing function.
Recall that for a borrowing function, we require that if $f(j)=f(k)$ with $j\neq k$,
then either all gates with support on $j$ act before all gates with support on $k$ or vice-versa.
Consider a site $j$ with $\lfloor {\rm dist}(j,B)\rfloor=m$.  Then, gates acting on $j$ act at times in the interval $[m,m+(r+1)\dep]$, where we use the time function $\tau$.
We will decompose the set of ancilla qudits into several disjoint sets, called $S_0,S_1,\ldots$, where $S_a$ is the set of ancilla qudits $j$ with $(r+1)\dep a \leq \lfloor {\rm dist}(j,B)\rfloor < (r+1)\dep(a+1)$.
Then, gates acting on qudits in in $S_a$ will all act before (or will all act after) those acting on qudits in $S_b$ for any $a,b$ with $|a-b|>1$.
So, for qudit $j\in S_a$ and $k\in S_b$ with $|a-b|>1$, we do not have a constraint that $f(j)\neq f(k)$.  We now make a choice to ensure that
for $j\in S_a$ and $k\in S_{a+1}$ that $f(j) \neq f(k)$.

Let us first introduce some notation.
For each site $j$, let $b(j)$ be a closest point in the net $B$ and for any $x\in T$ let $T(x)$ be the set of $j$ with $b(j)=x$.
We will decompose $T(x)$ into two disjoint sets, $T_1(x),T_2(x)$, each of size at least $\lfloor T(x)/2 \rfloor$.
We will then make an arbitrary choice: we will choose that for $j\in T(x)$, if $j\in S_a$ for $a$ even (or odd), we will pick $f(j)$ in $T_1(x)$ (respectively, $f(j)\in T_2(x))$.
This guarantees the needed property to get a borrowing function.
Further, by construction ${\rm dist}(j,f(j))\leq 2\ell_0$.

So, it remains to show that we can indeed make such a choice of $f(j)$.  However, this is possible by the density assumption:
for each point $x\in B$, $T(x) \cap S_a$ is bounded by $c'((r+1)d)^d(a+1)^{d-1}$.  However, the number of
physical qudits closer to $x$ than to any other point in the net is at least $c \ell_0^d$.
For sufficiently large $\ell_0$, the second number is larger than the first and so the matching exists.

Thus we have
\begin{theorem}
Under Assumption \ref{a1} above, for sufficiently large $\ell_0$ depending only on $c,c',d$ and on the range and depth of the quantum circuit,
there is a borrowing function with ${\rm dist}(j,f(j))\leq \ell_0$ and so ancillas may be removed.
\end{theorem}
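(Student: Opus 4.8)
The plan is to reduce the theorem to the construction preceding it: by the lemma just proved, once we exhibit a causal time function together with a borrowing function $f$ valued in physical qudits with ${\rm dist}(j,f(j))=O(\ell_0)$, the ancillas can be removed. The causal time function costs nothing: fix an $\ell_0$-net $B$, take $\hfn(x)={\rm dist}(x,B)$, which is $1$-Lipschitz, and apply Lemma \ref{tfun} with $c=r+1$ to get the causal $\tau(S,a)=\lfloor{\rm dist}(S,B)\rfloor+(r+1)a$. So the real work is building $f$ and verifying its two defining conditions — that ancillas sharing a physical target have totally ordered gate-supports, and that each borrowed physical qudit is idle throughout the window in which it impersonates its ancilla — while keeping $f$ short-range.

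First I would read off the time-geometry of $\tau$: a site at distance-to-net $m$ has all its gates in the window $[m,m+(r+1)\dep]$, so grouping the ancillas into shells $S_a=\{j:(r+1)\dep\,a\le\lfloor{\rm dist}(j,B)\rfloor<(r+1)\dep\,(a+1)\}$ makes the windows of $S_a$ and $S_b$ disjoint and ordered whenever $|a-b|\ge 2$. Two ancillas in such well-separated shells may therefore share a target for free, and the only collisions to prevent are between adjacent shells. I rule these out structurally: writing $b(j)$ for a nearest net point and $T(x)=\{j:b(j)=x\}$ for the Voronoi cell, split the physical qudits of $T(x)$ into halves $T_1(x),T_2(x)$ and send even-indexed shells into $T_1(x)$, odd-indexed into $T_2(x)$. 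Then $f(j)=f(k)$ forces equal parity, hence either a common shell (where I demand injectivity) or shells $\ge 2$ apart (already time-separated); and choosing each target whose distance-to-net differs from the ancilla's by more than $(r+1)\dep$ secures the idleness condition. Since $j$ and $f(j)$ share the cell of $x$ and each lies within $\ell_0$ of $x$, we get ${\rm dist}(j,f(j))\le 2\ell_0$, i.e.\ the stated bound after rescaling $\ell_0$.

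The one quantitative step — and the step I expect to be the crux — is that the injective assignment inside each half-cell actually exists, and this is precisely where Assumption \ref{a1} enters. In cell $T(x)$ the ancillas of $S_a$ occupy the annulus of radii $(r+1)\dep\,a$ and $(r+1)\dep\,(a+1)$ about $x$; summing the shell bound $\lfloor c'\ell^{d-1}\rfloor$ over the $(r+1)\dep$ unit shells of that annulus gives $|T(x)\cap S_a|\le c'((r+1)\dep)^d(a+1)^{d-1}$. The essential point is that $B$ is a \emph{covering} net, so no distance-to-net exceeds $\ell_0$ and only shells with $a\lesssim \ell_0/((r+1)\dep)$ occur; this cutoff tames the growing factor $(a+1)^{d-1}$, leaving at most $\sim c'(r+1)\dep\,\ell_0^{d-1}$ ancillas in any single shell. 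Against this, the ball of radius $\ell_0/2$ about $x$ lies inside $T(x)$ (net points are $\ge \ell_0$ apart), so each half-cell holds at least $\sim c\,\ell_0^d$ physical qudits, of which only an $O((r+1)\dep\,\ell_0^{d-1})$ annular fraction is forbidden by the idleness requirement. Hence once $\ell_0$ exceeds a constant multiple of $c'(r+1)\dep/c$ — depending only on $c,c',d$ and the circuit's range and depth — the available physical qudits outnumber the ancillas of each shell, the injections exist, and the borrowing function is complete; the earlier lemma then removes the ancillas.
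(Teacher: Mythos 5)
Your construction is essentially the paper's: the same $\ell_0$-net $B$ and causal time function $\tau(S,a)=\lfloor{\rm dist}(S,B)\rfloor+(r+1)a$, the same shells $S_a$ of thickness $(r+1)\dep$, the same Voronoi cells $T(x)$ split into halves with even/odd shells routed to $T_1(x)$/$T_2(x)$, and the same count of per-shell ancillas, $\le c'((r+1)\dep)^d(a+1)^{d-1}$, against $\gtrsim c\,\ell_0^d$ physical qudits per cell. Your observation that the covering property of the net caps $a\lesssim \ell_0/((r+1)\dep)$, taming the factor $(a+1)^{d-1}$, is exactly the point the paper leaves implicit. You also go beyond the paper in one respect: you explicitly verify the second condition in the definition of a borrowing function (that the target $i=f(j)$ has its own gate times $T_i(i)$ totally ordered against $T_j(i)$), a condition the lemma's proof genuinely uses but which the paper's construction never checks.

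However, the justification you give for that extra step has a gap. You discard from each half-cell the physical qudits in the annulus $|{\rm dist}(i,B)-{\rm dist}(j,B)|\le (r+1)\dep$ and assert that only $O((r+1)\dep\,\ell_0^{d-1})$ qudits are thereby lost. Assumption \ref{a1} does not support this: it upper-bounds \emph{ancilla} qudits in unit shells and lower-bounds \emph{physical} qudits in balls, but gives no upper bound whatsoever on the number of physical qudits in an annulus, so the forbidden annulus could a priori contain the bulk of the physical qudits you counted (adding extra physical qudits anywhere never violates the assumption, so they may cluster). The conclusion is salvageable with the tools you have, by applying the ball lower bound to a ball that avoids the forbidden annulus rather than subtracting: for borrowers with ${\rm dist}(j,B)\le \ell_0/16$, take targets in a ball of radius $\ell_0/8$ centered at a point at distance $\ell_0/4$ from the net point (all its points have distance-to-net at least $\ell_0/8$, which exceeds ${\rm dist}(j,B)+(r+1)\dep$ once $\ell_0$ is large), and for borrowers with ${\rm dist}(j,B)> \ell_0/16$, take targets in the ball of radius $\ell_0/32$ about the net point; each such ball contains at least $\lfloor c(\ell_0/32)^d\rfloor$ physical qudits by Assumption \ref{a1}, which beats the per-shell ancilla count $\sim c'(r+1)\dep\,\ell_0^{d-1}$ once $\ell_0$ exceeds a constant (depending on $d$) multiple of $c'(r+1)\dep/c$, with each target region split in half to preserve your parity scheme.
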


As a corollary
\begin{corollary}
Consider any pair of QCA $\alpha,\beta$ with range $R$.
Let the Assumption \ref{a1} hold. Further, we assume that there are {\it no} ancilla qudits.
Then $\alpha \circ \beta \circ \alpha^{-1} \circ \beta^{-1}$ is a quantum circuit with depth and range depending only on $R,c,c',d$.
\begin{proof}
Using ancillas, $\alpha \circ \beta \circ \alpha^{-1} \circ \beta^{-1}$  is a quantum circuit.  Since assumption \ref{a1} held without ancillas, when we add one extra ancilla for each physical qudit it still holds and so we may remove the ancillas.
\end{proof}
\end{corollary}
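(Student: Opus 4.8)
The plan is to reduce the claim to two ingredients already in hand: the ancilla-assisted abelianness recalled at the start of this section, and the ancilla-removal theorem just proved. First I would invoke the known fact (following \cites{fh,hfh}, as recalled in the overview) that once ancillas are allowed the commutator is a circuit: adjoining one ancilla qudit to each physical qudit, there is a quantum circuit $C$ whose depth and range are bounded by a function of $R$ alone and which implements $U=(\alpha\circ\beta\circ\alpha^{-1}\circ\beta^{-1})\otimes I$, with $I$ acting on the ancilla degrees of freedom. The point to extract is that this $U$ acts trivially on the ancillas, which is precisely the hypothesis of the ancilla-removal lemma and theorem above.

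Next I would verify Assumption \ref{a1} for this doubled system so that the ancilla-removal theorem applies. Since we began with no ancillas, every original qudit is physical and adjoining one ancilla per physical qudit leaves the physical qudits and their positions untouched; hence the lower bound $\lfloor c\ell^{d}\rfloor$ on the number of physical qudits in a ball of radius $\ell$ is preserved verbatim. The new ancillas occupy exactly the positions of the physical qudits, so the number of ancillas in any shell of thickness one equals the number of physical qudits there, and the required shell bound $\lfloor c'\ell^{d-1}\rfloor$ for the doubled system is nothing but the corresponding bound for the original all-physical configuration. Thus Assumption \ref{a1} holds for the doubled system with the same constants $c,c',d$, and the theorem supplies a causal time function $\tau(S,a)=\lfloor\hfn(S)\rfloor+ca$ built from $\hfn(x)={\rm dist}(x,B)$ for an $\ell_0$-net $B$ (Lemma \ref{tfun}), together with a borrowing function satisfying ${\rm dist}(j,f(j))\le \ell_0$, where $\ell_0$ depends only on $c,c',d$ and on the depth and range of $C$, hence only on $R,c,c',d$. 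The ancilla-removal lemma then yields an ancilla-free circuit $C''$ implementing a unitary $U''$ with $U=U''\otimes I$, so $C''$ realizes $\alpha\circ\beta\circ\alpha^{-1}\circ\beta^{-1}$ on the physical qudits with no ancillas. Tracking the range and depth through Lemma \ref{Vbdef} and the swap conjugations, the depth of $C''$ is controlled by $\tau_{\max}$ and its gate range is increased by at most $2\ell_0$, both of which are bounded by functions of $R,c,c',d$, as claimed.

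The step I expect to be the main obstacle is not any single manipulation but the transfer of the density hypothesis in the previous paragraph: one must be sure that the original all-physical site distribution already satisfies a shell upper bound of the form $\lfloor c'\ell^{d-1}\rfloor$, and not merely the ball lower bound, so that Assumption \ref{a1} remains nonvacuous after the ancillas are adjoined. Granting this compatibility of the two density bounds, the remainder is a direct composition of the lemmas and the theorem established above.
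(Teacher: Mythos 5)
Your proposal is correct and takes essentially the same route as the paper's proof, which likewise invokes the ancilla-assisted circuit for the commutator, observes that Assumption \ref{a1} (holding with no ancillas) persists after adjoining one ancilla per physical qudit, and then applies the ancilla-removal theorem. The one obstacle you flag---that the doubled system's ancilla shell bound is really a shell \emph{upper} bound on the original physical qudits, which the ball lower bound in Assumption \ref{a1} does not by itself supply---is present but unremarked in the paper's one-line verification as well, so your treatment is if anything the more careful of the two.
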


Now let us consider to what extent we can generalize Assumption \ref{a1}.  First, we emphasize that {\it some} kind of lower bound on the density of physical qudits is necessary.  To see this, consider the following simple example.  Imagine a large square lattice, size $\ell\times \ell$ for some $\ell \gg 1$, with periodic boundary conditions in both directions and with one qudit per site.  Let the qudits with horizontal coordinate equal to $0$ or to $\ell/2$ be physical qudits, and let all other qudits be ancilla qudits.  Then, there is a quantum circuit with range $O(1)$ which implements a vertical shift by $+1$  on one line of physical qudits (say the qudits with horizontal coordinate $0$) and implements a vertical shift by $-1$ on the other line of physical qudits (those with horizontal coordinate $\ell/2$.  This circuit is implemented by a ``swindle" (see for example discussion of such a circuit in Ref.~\cite{fh}).  On the other hand, no such circuit exists which acts just on the physical qudits.

However, we still expect that some form of ancilla removal can be implemented on more general control spaces so long as the coarse density of physical qudits is comparable to the coarse density of ancilla qudits.  Certainly, if there is one physical and one ancilla qudit per site, one expects that ancilla removal can be implemented more generally.
It seems that to do this, one way is to generalize the choice of function $\hfn$ in lemma \ref{tfun}.
For example, :
\begin{lemma}
Let $\hfn: X\ra [0,T]$ be $1$-Lipschitz where $T$ is $O(1)$.  Assume that there is one physical and one ancilla qudit per site.
Consider a quantum circuit of depth $\dep$ with gates of range $r$.
Assume that there is an involution $f$ on the set of sites such that
for any site $x$, $|\hfn(x)-\hfn(f(x))|>\dep(r+1)$ and  ${\rm dist}(x-f(x))\leq r \cdot O(1)$.
In words, this involution maps each site to a nearby site with a sufficiently different value of $\hfn$.

Then, 
$\tau(S,a)=\lfloor \hfn(S) \rfloor+(r+1) a$ is a causal time function and
$f(\cdot)$ defines a borrowing function for this time function.

Removing ancillas using this time function and borrowing function increaes the depth and range of the quantum circuit by an $O(1)$ multiplicative factor.
\begin{proof}
Consider a pair $x,f(x)$.  Assume without loss of generality $\hfn(x)>\hfn(f(x))+d(r+1)$.
Since the quantum circuit $C$ has depth $\dep$, all gates in circuit $C'$ acting on the qudits on site $f(x)$ are executed at time at most $$\lfloor \hfn(f(x)) \rfloor + \dep(r+1)\leq \hfn(f(x)) + \dep(r+1)<\hfn(x).$$  All gates in $C'$ acting on the qudits on site $x$ are in a gate which is supported on some set $S$ within distance $r$ of $x$ so they are executed at time at least
executed at time at least $$\lfloor \hfn(S) \rfloor + (r+1)\geq \hfn(S)+r \geq \hfn(x).$$
So, the borrowing condition is satisfied.
\end{proof}
\end{lemma}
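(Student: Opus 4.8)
The plan is to certify the two structural hypotheses demanded by the ancilla-removal machinery --- that $\tau$ is a causal time function and that $f$ is a legitimate borrowing function --- and then to bound the cost, at which point the preceding lemma (the one showing $U=U'$ whenever the circuit acts trivially on the ancillas) supplies correctness with no further work. Causality is immediate: since $\hfn$ is $1$-Lipschitz we may take $K=1$ in Lemma \ref{tfun}, and the chosen slope $c=r+1$ satisfies $c\geq rK+1$, so Lemma \ref{tfun} already tells us that $\tau(S,a)=\lfloor \hfn(S)\rfloor+(r+1)a$ is causal.

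The heart of the argument is verifying the two defining conditions of a borrowing function for the involution $f$. Because the hypothesis $|\hfn(x)-\hfn(f(x))|>\dep(r+1)>0$ forbids fixed points, $f$ is a fixed-point-free involution, so each site $i$ has exactly one preimage, namely $f(i)$; hence there is never a pair of \emph{distinct} sites $j\neq k$ with $f(j)=f(k)=i$, and the first condition holds vacuously. For the second condition I would fix a pair $(x,f(x))$ and assume without loss of generality $\hfn(x)>\hfn(f(x))+\dep(r+1)$. Every gate acting on $f(x)$ lies in some $S\ni f(x)$, so $\hfn(S)\leq \hfn(f(x))$ and, using $a\leq \dep$, its time is at most $\hfn(f(x))+\dep(r+1)<\hfn(x)$. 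Every gate acting on $x$ lies in some $S\ni x$ of diameter at most $r$, so $\hfn(S)\geq \hfn(x)-r$, whence $\lfloor \hfn(S)\rfloor\geq \hfn(x)-r-1$, and with $a\geq 1$ its time is at least $\hfn(x)-r-1+(r+1)=\hfn(x)$. Thus all gates on $f(x)$ strictly precede all gates on $x$, which is exactly the required separation. Note that the slack $\dep(r+1)$ assumed on $|\hfn(x)-\hfn(f(x))|$ is precisely what these two estimates consume, which is why the hypothesis carries that constant. The range condition on $f$ is automatic, since one physical qudit sits on every site and so the image of $f$ consists entirely of sites carrying a physical qudit.

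It remains to bound the cost. Each new gate $U'_{S,a}=\bigl(\prod_{j\in S}\sigma_{f(j),j}\bigr)U_{S,a}\bigl(\prod_{j\in S}\sigma_{f(j),j}\bigr)$ is supported on $S\cup f(S)$; since $\mathrm{diam}(S)\leq r$ and $\mathrm{dist}(x,f(x))\leq r\cdot O(1)$ for every site, the triangle inequality gives $\mathrm{diam}(S\cup f(S))=O(r)$, so the gate range grows only by a constant factor. After discarding the swaps acting on ancillas one obtains the circuit $C''$ on the physical qudits alone, still organized into the original $\dep$ rounds but now with range-$O(r)$ gates that may overlap within a round; regrouping the gates of each round into mutually disjoint sub-rounds, which costs only an $O(1)$ multiplicity for a control space of bounded geometry (the regime in which Assumption \ref{a1}-type hypotheses live), leaves the depth $O(\dep)$. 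Equivalently, reorganizing by $\tau$ as in Lemma \ref{Vbdef} works: because $\hfn$ takes values in $[0,T]$ with $T=O(1)$, only $O(1)$ distinct rounds $a$ can share any single value of $\tau$, and $\tau$ ranges over $O(\dep(r+1)+T)$ values. With $r,\dep,T$ all treated as absolute constants, the gate range, the depth, and hence the circuit range all increase by at most an $O(1)$ multiplicative factor.

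The step I expect to require the most care is the separation estimate of the second paragraph: the inequalities must be carried out with the floor in place and with $\hfn(S)$ read as a minimum over $S$, and one must confirm that the gap $\dep(r+1)$ is genuinely sufficient rather than merely plausible --- the two one-sided bounds meet exactly at the value $\hfn(x)$, with no room to spare, which is the delicate point. The cost bound is essentially bookkeeping once one is willing to invoke bounded geometry (or, equivalently in the QCA setting, to regard $r,\dep,T$ as constants independent of system size), and the correctness of the final ancilla-free circuit needs no new argument, being delivered verbatim by the earlier lemma once $\tau$ and $f$ have been certified.
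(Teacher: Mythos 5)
Your proposal is correct and takes essentially the same route as the paper: the decisive step in both is the identical pair of one-sided estimates, namely that every gate touching $f(x)$ acts at time at most $\hfn(f(x))+\dep(r+1)<\hfn(x)$ while every gate touching $x$ acts at time at least $\lfloor \hfn(S)\rfloor+(r+1)\geq \hfn(x)$. You are in fact more complete than the paper's own proof, which leaves implicit the causality check via Lemma \ref{tfun}, the vacuousness of the multiple-preimage condition for a fixed-point-free involution, and all the depth/range bookkeeping (where, rather than invoking bounded geometry or treating $r,\dep$ as constants, the cleanest count is that each original round splits into at most $T+1$ distinct values of $\tau$, and gates sharing a $\tau$ value have disjoint supports by your separation estimate, giving a depth factor of $T+1=O(1)$ outright).
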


Note that we want $\hfn$ to have a bounded variation, so that the depth of the circuit does not increase too much.
However, to find the involution $f$, it is important that the density
 (on some coarse scale) of sites in the inverse image under $\hfn$ of various intervals $[i,i+1]$ is roughly constant.  That is, if we try the choice $\hfn(x)={\rm dist}(X,B)$ for a hyperbolic space, we find that most of the density is in the inverse image of the interval $[\ell_0-1,\ell_0]$, i.e., most points are far from the net, so we cannot find a physical qudit to borrow to remove those ancillas.

For the two dimensional hyperbolic plane, however, we can construct borrowing function as follows.  Choose a periodic tiling by large (compared to the circuit range) tiles.  Choose the height function to be $0$ at the center of the tiles and increase radially outward.  For example, far enough from the boundary of a tile, choose the height function to simply be the distance from the center of a tile.  Assuming a site density which is uniform on this scale with one physical and one ancilla qudit per site, we can remove the ancillas far enough from the boundary of the tiles simply because most of the volume is not near the center of the tiles.  What is left is then some quantum circuit supported near the $1$-skeleton of this tiling.  We can then find some borrowing to remove ancillas for this circuit, choosing the height function to increase from the center of the edges to their boundary.  We leave a precise construction and a generalization to higher dimensions for future work.

A regular tiling of hyperbolic space is generally constructed by selecting a co-compact reflection group $\Gamma$ and then each tile is a convex fundamental domain.  Surprisingly there is  Vinberg's theorem\cite{vinberg1967discrete} which states that above dimension $30$ there are no cocompact reflection groups acting on hyperbolic space ( 30 may not be a sharp bound).  This rather surprising theorem is number  theoretic in nature and the does not yet seem to have a geometric explanation. But it serves as a warning  that it may not be straight forward to generalize the sketch we presented for ${\mathbb H}^2$ to hyperbolic space in all higher dimensions.

\section{Coherent Families and the official definition of QCA}
\label{sec:cf}

This section completes the circle.
We began with naive definitions of QCA and quantum circuits and then enhanced them to sequence-based definitions.
In this section,
we prove coherence theorems, Theorem \ref{mpleating} and \ref{deform}, to show that an initial QCA $\alpha$ (of sufficiently small range $R$) in fact contains all the necessary information in the first place. The ``mother" $\alpha_0 \in$ QCA of a {\it coherent} sequence $\lbar{\alpha} \in$ QCA effectively determines $\lbar{\alpha}$ which obeys a strong uniqueness property. This allows us to dispense with sequences and return, now with greater confidence, to the naive definition.

This section will focus on path coherent families and equivalence up to paths.  The next section \ref{circ} will discuss circuit equivalance of families.

Consider a coherent family $\lbar{\alpha} = \{\alpha_0, \alpha_1, \dots\}$. Each $\alpha_i$ is of range $\frac{R}{2^i}$, with $\alpha_0 = \alpha$ called the \emph{mother}. 
It is required in definition \ref{cohfam} that $\alpha_{i+1}$ be obtained from $\alpha_i$, $i \geq 0$ by a path of QCA.

However, for purposes of applying the ``pleating" construction in this section, we will
allow $\alpha_{i+1}$ to be obtained from $\alpha_{i}$ by what we will call
a \emph{deformation}. We will see later that this in fact is not a change in the definition: all of the allowed deformations
can in fact be described by stabilization and quantum circuit and hence our construction will give a circuit coherent family, not just a path coherent family.

We will see that the pleating construction will give a circuit coherent family.  The uniqueness results in this section will apply however to path coherent families more generally and will deal with a more general notion of equivalence under paths.  In section \ref{circ} we will deal with circuit coherent families and equivalence using circuits.

A deformation is a composition of $c_1$ local deformations, $c_1$ a constant independent of $i$ depending only on the control space $X$. (We may take $c_1 = \mathrm{dim}(X) + 1$.) A \emph{local deformation} consists of a composition of four operations all supported on the set $h_j \subset X$ where $h_j$ is the disjoint union of well seperated (contractible) balls. In our application $X$ is given a handle decomposition $h$ consisting of 0-handles $h_0$, 1-handles $h_1, \dots, d$-handles $h_d$. The well-separated condition can be taken to be that the distance between any two disjoint handles in $X$ is $>10 \cdot 2^d R_i$, $d$ the dimension of the control space $X$, and each handle will be assumed to have diameter bounded by a constant times $R_i$; a precise sufficient condition is given below (If $X$ is P.L. triangulated, a well-separated handle decomposition can be obtained from a refinement of the original triangulation). The four operations, which together constitute what is allowed by a quantum in a stable setting, are:
\begin{equation}\label{operations}
\begin{split}
	& \text{(1) Stabilization: the index set $I$ is enlarged to $I^\pr$ with new finite Hilbert spaces} \\[-0.25em]
	& \hspace{1.5em}\text{$\mathcal{H}_i$, $i \in I^\pr \setminus I$ introduced and the control map $\mathcal{I}: I \ra X$ extended by a locally} \\[-0.25em]
	& \hspace{1.5em} \text{finite mapping of $I^\pr \setminus I$ to $X$.} \\
	& \text{(2) Acting by arbitrary quantum circuits over $h_i$ .} \\[-0.25em]
	& \hspace{1.5em} \text{independent of $i$.} \\
	& \text{(3) Destabilization: the inverse operation to (1)} \\
	& \text{(4) Composing the control map $\mathcal{I}: I \ra X$ by a diffeomorphism $g: X \ra X$,} \\[-0.25em]
	& \hspace{1.5em} \text{isotopic to the identity. }
\end{split}
\end{equation}
Note: It is possible for a deformation to implement all stabilizations at one time at the beginning and all destabilizations at one time at the end. Then since steps (2) and (4) can be done adiabatically, it is legitimate to think of ``deformation" indeed as a continuous process.

Further, clearly, (1-3) are all examples of quantum circuits or stabilization (the destabilization operation (3) is equivalent to stabilizing $\alpha_{i+1}$), while (4) can be described by circuits and stabilization by tensoring with ancillas and using swap gates.  For example, to a site from some point $x$ to some point $x'$, we tensor with an additional ancilla at $x'$ (stabilization), conjugate by a swap, and then remove the the original degree of freedom at $x$ (destabilization).
Since the handle diameter is bounded by $R_i$, (2) can be implemented by a single gate of range bounded by $R_i$ and hence is a quantum circuit.  Similarly, the range bound on the swaps in (4) is obeyed.

Now we define the group $\lbar{\text{QCA}(X)}$ to be the group of coherent families over $X$ subject to the equivalence relation $\lbar{\alpha} \equiv \lbar{\beta}$ if there exists a $k_0$ so that for $k \geq k_0$, $\alpha_k = \beta_k$ i.e. sequences are equivalent if they agree on a terminal segment. Notice that all finite compositions are defined and (up to equivalence) act with arbitrarily small range.

We also consider finite depth quantum circuits fdqc$(X)$ on $X$ whose depth is $O(1)$ and \emph{not} allowed to diverge as the degrees of freedom become more closely packed in $X$ and the sequence index $i$ approaches infinity. When $\lbar{\text{fdqc}(X)}$ is similarly defined as terminal equivalence classes of families of circuits, with each family having range of the gates in the circuit bounded by $R/2^i$ for some $i$, we have
\begin{equation}
	\lbar{\text{fdqc}}(X) \lhd \lbar{\text{QCA}}(X)
\end{equation}
and as proved in \cites{fh,hfh} the quotient is again abelian:
\begin{equation}
	\lbar{Q}(X) = \lbar{\text{QCA}}(X) / \lbar{\text{fdqc}}(X)
\end{equation}

Theorem \ref{deform} will allow us to complete the circle and remove the ``bars" which we added for mathematical precision. It says that if the initial $R$ is small w.r.t. the handle structure on $X$, then $R$-sequences exist and are unique in a very strong sense. But first we state a purely topological result.

Given a $d$-dimensional Riemannian manifold $(X, g)$, possibly with boundary and possibly non-compact, define a map $p: X \ra X$ to be a \emph{pleating} if it is a finite (at most $d+1$-fold) composition of \emph{basic pleatings} which are defined to be smooth maps $p_i: X \ra X$ so that for every point in the source smooth local coordinates in source and target can be found so that the map takes on one of three forms:

\begin{enumerate}
	\item rank $=d$, in local coordinates about $\vec{x} = (0, \dots, 0)$, $p(x_1, \dots, x_d) = (x_1, \dots, x_d)$
	\item fold, $p(x_1, \dots, x_d) = (x_1^2, x_2, \dots, x_d)$, rank $=d-1$
	\item cusp, $p(x_1, \dots, x_d) = (x_1^3 + x_2x_1, x_2, \dots, x_d)$, rank $=d-1$
\end{enumerate}

Thus $p = p_n \circ \cdots \circ p_1$.

Given $p$, let $X_p$ be the graph of $p$; $X_p = \{(x, p(x)\} \subset X \times X$. Of course $\pi_1: X_p \ra X$ is a diffeomorphism. Our interest in $X_p$ comes from the fact that it inherits a piecewise-smooth Riemannian metric $g_p$ from the second projection: $g_p = \pi_2^\ast(g)$. The next theorem says that we may choose $p$ so that the first projection $\pi_1: X_p \ra X$ is coarsely compressive.

\begin{theorem}[Manifold pleating]\label{mpleating}
	Let $(X_g)$ be a Riemannian manifold of dimension $d \geq 1$, possibly with boundary and possibly non-compact. For every $\epsilon$, $\epsilon^\pr > 0$ there is a pleating $p: X \ra X$ so that $\mathrm{dist}_g(x_1, x_2) \leq \epsilon \mathrm{dist}_{g_p}((x_1, p(x_1)), (x_2, p(x_2)))$, whenever $\mathrm{dist}_g(x_1, x_2) \geq \epsilon^\pr$.

	The next theorem expresses the fact that the set of pleatings is a very well-behaved function space, and well-adapted for application to parameter families.
\end{theorem}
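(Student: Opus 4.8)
The plan is to first translate the metric inequality into a statement about path lengths under $p$, and then to realize $p$ as a composition of finitely many one-directional ``accordion'' folds whose folding directions positively span the tangent space at every point. Since a tangent vector to $X_p$ at $(x,p(x))$ has the form $(v, dp\,v)$ and $g_p=\pi_2^\ast g$, the $g_p$-length of a path in $X_p$ equals the $g$-length of its image under $p$. Hence $\mathrm{dist}_{g_p}\big((x_1,p(x_1)),(x_2,p(x_2))\big)=\inf_c \mathrm{length}_g(p\circ c)$, the infimum over paths $c$ from $x_1$ to $x_2$ in $X$. The desired conclusion is therefore exactly that $p$ \emph{coarsely expands} lengths: for every $c$ joining points at distance $\geq\epsilon^\pr$ one has $\mathrm{length}_g(p\circ c)\geq \epsilon^{-1}\,\mathrm{dist}_g(x_1,x_2)$. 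I would aim to produce such a $p$ with expansion constant $\epsilon^{-1}$ above the threshold $\epsilon^\pr$.

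The engine is a one-dimensional computation. On $\R$ let $\phi_{w,M}$ be a smooth ``triangle wave'' of wavelength $w$ whose slope has modulus $\geq M$ off small neighborhoods of its critical points, where its normal form is the fold $t\mapsto t^2$. Extending by the identity in the remaining coordinates gives a self-map of $\R^d$ that is a basic pleating (rank $d$ away from the critical level sets, a fold along them). Its key property is that any path whose first coordinate changes by $\Delta$ has image of $g$-length at least $M\Delta$, up to the error contributed by a single incomplete tooth; choosing $w\ll\epsilon^\pr$ and $M\geq\epsilon^{-1}$ makes this a clean $\epsilon^{-1}$-expansion in the $x_1$-direction at scales $\geq\epsilon^\pr$. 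The threshold $\epsilon^\pr$ is precisely what lets the folds accumulate before the estimate is claimed.

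To globalize and to compress every direction at once I would fix a smooth triangulation of $X$ with simplices of diameter $\ll\epsilon^\pr$ (locally finite, adapted to the injectivity radius in the noncompact or boundary case) and take the $d+1$ smoothed barycentric-coordinate functions $f_0,\dots,f_d$ of its first barycentric subdivision. These satisfy $\sum_k f_k\equiv\mathrm{const}$, and on each simplex any $d$ of the differentials $df_k$ are independent, so at every point the $d+1$ directions dual to the $df_k$ positively span $T_xX$. For each $k$ let $p_k$ be the accordion of the previous paragraph applied ``in the $f_k$-direction'': using a flow transverse to the level sets of $f_k$, replace $f_k$ by $\phi_{w,M}\circ f_k$. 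Each $p_k$ is then a basic pleating with only fold (and, where the level sets of $f_k$ curve relative to the chosen critical values, cusp) singularities, and $p:=p_d\circ\cdots\circ p_0$ is a pleating that is a composition of exactly $d+1$ basic pleatings, matching the bound in the statement. A generic choice of the critical values of $\phi_{w,M}$ relative to the triangulation keeps all singular loci in general position, so each factor stays within the allowed normal forms.

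The main obstacle is the compression estimate for the composition together with the infimum over all paths. Coarse expansion does not compose formally, because a later fold $p_{k+1}$ could in principle undo the stretching produced by $p_k$ along some direction, and a clever path might try to travel only along directions that some individual accordion leaves unstretched. Both difficulties are meant to be defeated by the positive-spanning property: any path joining points at distance $\geq\epsilon^\pr$ must change at least one $f_k$ by a definite amount, and that $f_k$'s accordion forces the corresponding length. The delicate point is to turn this ``some direction is stretched'' statement into a uniform lower bound on $\mathrm{length}_g(p\circ c)$ after all $d+1$ folds have acted in sequence; I would control this by tracking, along the image path, the total variation of each $f_k$ and bounding the final length below by a fixed fraction of $M$ times the largest such variation, using that the folding directions form a cone that no unit tangent vector can be simultaneously orthogonal to. Verifying this uniformly---in particular maintaining the expansion constant across the noncompact or variable-injectivity-radius case, where $w$ and $M$ must be chosen locally yet yield a global $\epsilon^{-1}$---is where the real work lies.
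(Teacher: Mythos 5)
Your reduction of the statement to coarse length expansion is correct (since $g_p=\pi_2^\ast g$, the $g_p$-distance is exactly $\inf_c \mathrm{length}_g(p\circ c)$ over paths $c$ from $x_1$ to $x_2$), your one-dimensional accordion is the right basic engine, and the count of $d+1$ basic pleatings is consistent with the definition. But the proposal has a genuine gap, and it is exactly the step you defer as ``where the real work lies'': the lower bound for the composition. This is not a postponable verification; it is the content of the theorem, and the two mechanisms you offer for it fail as stated. First, the positive-spanning property of $\{df_k\}$ does not survive composition: what you actually need is spanning-type control on the pulled-back differentials $d\bigl(f_k\circ p_{k-1}\circ\cdots\circ p_0\bigr)$, and at a fold hypersurface of $p_0$ the kernel direction $u$ of $dp_0$ satisfies $d\bigl(f_k\circ p_{k-1}\circ\cdots\circ p_0\bigr)(u)=0$ for every $k\geq 1$ and also $d(\phi\circ f_0)(u)=\phi'(f_0)\,df_0(u)=0$; so pointwise there are directions that no accordion in your composition stretches, and the ``cone no unit vector is orthogonal to'' degenerates precisely where it is needed. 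Second, total-variation bookkeeping does not close the argument either: a path can oscillate $f_k$ with amplitude below the smoothing width of $\phi_{w,M}$ while sitting at a critical value, producing arbitrarily large $\mathrm{TV}(f_k\circ c)$ with negligible image variation; so the estimate must be run on a coarse-scale variation, and one must additionally show that the earlier accordions, which displace points by $\sim Mw$ (i.e., by much more than that fine scale), cannot cancel the coarse variation seen by the later ones. An argument along these lines may well exist, but it is absent, and nothing in the proposal substitutes for it.

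There is also a second, more elementary defect: your $p_k$ is built by flowing transversally to the level sets of $f_k$, but the color functions $f_k$ necessarily have critical points (maxima at the color-$k$ vertices of the subdivision, saddles elsewhere), where the transverse flow, hence $p_k$, is undefined; near such points the level sets are spheres collapsing to a point and the map must be completed by hand, which is where sphere folds and cusps actually concentrate. It is instructive that the paper's proof is organized around exactly these two difficulties: it works with a handle decomposition (essentially dual to your coloring --- neighborhoods of color-$k$ vertices play the role of $k$-handles), pleats each handle only along its core $D^k$ directions, radially compresses the cocore $D^{d-k}$ directions, and orders the stages so that every direction compressed in an index-$k$ handle leads into a higher-index handle where it becomes a core, hence pleated, direction; the $2^d$-type constants there control the cumulative stretching each stage inflicts on the next. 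That explicit accounting of which directions are compressed, and which later stage absorbs them, is the paper's substitute for your missing composition estimate; without it, or an equivalent, your outline does not yet prove the theorem.
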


\begin{theorem}\label{functionspace}
	The space of pleatings homotopic to the identity $\mathrm{id}_X$ (the topology is either the compact open topology w.r.t. $C^0$ or any $C^k$) is both contractible and locally contractible. Pleatings may be constructed pointwise near the identity and also relative to the identity on any open subset of $X$.
\end{theorem}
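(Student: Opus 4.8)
The plan is to treat the pleating condition as a \emph{flexible} (h-principle) singularity constraint and to contract the space using the wrinkling technology of Eliashberg and Mishachev. The point is that the two non-regular local models allowed in a basic pleating --- the fold $(x_1,\dots,x_d)\mapsto(x_1^2,x_2,\dots,x_d)$ and the cusp $(x_1,\dots,x_d)\mapsto(x_1^3+x_2x_1,x_2,\dots,x_d)$ --- are exactly the Morin singularities $A_1$ and $A_2$, i.e.\ the only singularities appearing in a \emph{wrinkle}. A wrinkle is a map, equal to the regular form outside a ball, whose singular locus is a sphere of folds cut by an equator of cusps; crucially it is itself a single basic pleating, and it can be created from, or absorbed into, the regular form along a canonical one-parameter family obtained by shrinking its radius to $0$. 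This canonical shrinking is the elementary move out of which I would build the contraction.

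The contraction itself I would organize through formal solutions. First I would show that every pleating homotopic to $\mathrm{id}_X$ is connected, through a path of pleatings, to a \emph{standard wrinkled form}: a regular base map carrying a finite, disjointly embedded family of standard wrinkles. This is precisely the content of the wrinkling h-principle, applied to the open, ample differential relation cut out by the fold/cusp Boardman conditions; the bound of $d+1$ basic factors is what guarantees enough room to realize the required wrinkle insertions by composition. Once in standard form, one contracts in two stages: simultaneously shrink all wrinkle radii to $0$ (a canonical deformation retraction onto the regular base, continuous in the wrinkle data), and then straighten the regular base toward $\mathrm{id}_X$ using the homotopy of its differential through bundle isomorphisms to the identity. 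Because the wrinkling theorem holds parametrically, each stage is continuous in families, so the whole space retracts onto $\{\mathrm{id}_X\}$, giving at least weak contractibility.

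Local contractibility and the relative and pointwise assertions I would obtain from the \emph{relative, microflexible} form of the same machinery: wrinkles are localized, so they may be produced in an arbitrarily small ball (the pointwise-near-identity construction), and the h-principle may be run relative to any closed set --- in particular relative to a prescribed open $U$ on which $p=\mathrm{id}$ --- yielding pleatings that agree with the identity on $U$. The microflexibility of the relation makes the mapping space a locally contractible ANR in the fine $C^k$ topology, which upgrades weak contractibility to honest contractibility.

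The hard part will be Step two for the pleatings that actually arise in Theorem \ref{mpleating}: these compress the metric by folding along \emph{non-compact} hypersurfaces, so their fold loci are global rather than packaged into shrinkable balls, and the naive ``shrink the wrinkles'' move does not apply to them directly. The real work is therefore to homotope such a global pleating, \emph{through pleatings and without exceeding $d+1$ factors}, to the standard wrinkled form before contracting, and --- relatedly --- to verify that the space of formal pleatings homotopic to the identity is genuinely contractible, which is the point at which the global topology of $X$ (already visible in whether a regular base such as a rotation can be straightened to $\mathrm{id}_X$) must be controlled.
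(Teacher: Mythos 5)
Your proposal follows essentially the same route as the paper: its proof likewise invokes Gromov's $h$-principle (citing Gromov and Eliashberg--Mishachev) for the flexible singularity relation allowing folds and cusps, concluding that the space of pleatings has the weak homotopy type of the formal space of maps pointwise close to $\mathrm{id}_X$, which is contractible and locally contractible. Your wrinkling machinery is just the concrete Eliashberg--Mishachev form of that same $h$-principle, and the issues you flag at the end (globally supported fold loci from Theorem \ref{mpleating}, and upgrading weak contractibility to genuine contractibility) are real but are left unaddressed by the paper's one-paragraph sketch as well.
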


\begin{proof}
	The allowed singularity types 1) immersion 2) fold and 3) cusp are sufficiently flexible that the space of these maps satisfies Gromov's $h$-principle \cites{gromov1986,eliashberg2011} without dimension restriction (compare this to the case of immersions alone in which according to Smale-Hirsch theory \cite{hirsch1959} the index of handles must be strictly less than the ambient dimension of the immersion.) As a result the space of such maps has the same weak homotopy type as the corresponding function space: $\{ \text{maps } X \ra X \text{ pointwise close to the identity}\}$ which is clearly contractible and locally contractible.
\end{proof}

Before giving the proof of Theorem \ref{mpleating}, which inducts on a handle decomposition, it is good to see how it works for $X$ an interval, where no handle decomposition is needed. In this case $p$ can be taken to have the graph indicated in Fig.~\ref{fig:corepleat}.

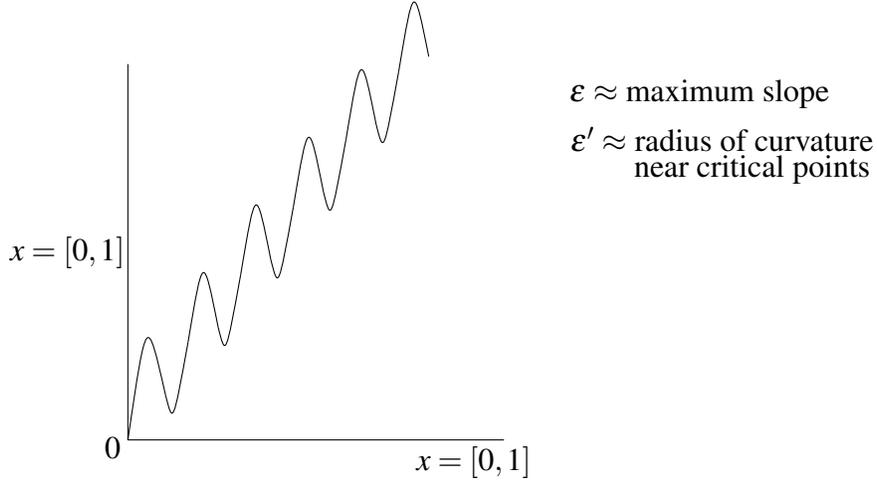
\begin{figure}[!h]
	\centering
	\begin{tikzpicture}
		\draw (0,0) -- (5,0);
		\draw (0,0) -- (0,5);

		\node at (-0.2,-0.1) {0};
		\node at (4.6,-0.3) {$x = [0,1]$};
		\node at (-0.8,2.5) {$x = [0,1]$};

		\draw (0,0) .. controls (0.25,1.7) .. (0.5,0.6);
		\draw (0.5,0.6) .. controls (0.6,0.2) .. (0.8,1.3);
		\draw (0.8,1.3) .. controls (1,2.5) .. (1.2,1.5);
		\draw (1.2, 1.5) .. controls (1.3, 1.1) .. (1.5, 2.2);
		\draw (1.5, 2.2) .. controls (1.7, 3.4) .. (1.9, 2.4);
		\draw (1.9, 2.4) .. controls (2, 2) .. (2.2, 3.1);
		\draw (2.2, 3.1) .. controls (2.4, 4.3) .. (2.6, 3.3);
		\draw (2.6, 3.3) .. controls (2.7, 2.9) .. (2.9, 4);
		\draw (2.9, 4) .. controls (3.1, 5.2) .. (3.3, 4.2);
		\draw (3.3, 4.2) .. controls (3.4, 3.8) .. (3.6, 4.9);
		\draw (3.6, 4.9) .. controls (3.8, 6.1) .. (4, 5.1);

		\node at (7.6,4.6) {$\epsilon \approx$ maximum slope};
		\node at (7.9,4) {$\epsilon^\prime \approx$ radius of curvature};
		\node at (8.3,3.6) {near critical points};
	\end{tikzpicture}
	\caption{Graph $p$, a basic 1D pleating.}
	\label{fig:corepleat}
\end{figure}

Remark: In applications it is useful that the pleats can be chosen large, $\gg R$, so when a QCA is pulled back under $(p_i)^{-1}$ we see an isomorphic copy restricted to a large open set of $X_p$. We will not make this completely explicit (it is more tedious than illuminating) but merely note, as we set up the handle by handle pleating, that the size of the handles should be large relative to the QCA radius $R$. The constants we give are comfortable overestimates. We choose a fine handle decomposition of $X$, sufficiently coarse w.r.t. the QCA radius $R$ so that the distance between disjoint handles $\geq 10 \cd  2^d R$ and handles are metrically convex with diameters in the range $100 \cd 2^dR$ to $1000 \cd 2^dR$.

The reason for the $2^d$ factors is that while we increase control in the co-core directions of a handle, $h_i$  of index $i$, by compressing these directions, this comes at the cost of stretching, say by a factor of 2,  parts of the higher index $j$, $j>i$, handles which attach to $h_i$. It is important that the cumulative effect of these stretchings does not increase the range $R$ of the QCA so much that the handle structure fails to be (approximately) preserved.  The key is point is that at any time during the handle by handle modifications of the QCA its action on operators supported near the central region of any handle core be mapped to operators supported well within that handle.

\begin{proof}[Proof sketch of Theorem \ref{mpleating}]
	The proof is visual and many pictures are now supplied. The first step is to shrink distances with $h_0$, the zero handles. This is done by radially compressing each 0-handle toward its origin, see Fig.~\ref{figpyr}.

	\begin{figure}[!ht]
		\centering
		\begin{tikzpicture}[scale=0.9]
			\draw (0,0) circle (2);
			\draw (0,0) circle (0.75);
			\draw (8,0) circle (2);
			\draw (8,0) circle (0.75);

			\draw [dashed] (0.65, 0.4) -- (1.5, 1.3) -- (6.5, 1.3) -- (7.35, 0.4);
			\draw [dashed] (0.65, -0.4) -- (1.5, -1.3) -- (6.5, -1.3) -- (7.35, -0.4);
			\draw [dashed] (-0.65, 0.4) -- (-1.5, 1.3) -- (-3.5, 1.3);
			\draw [dashed] (-0.65, -0.4) -- (-1.5, -1.3) -- (-3.5, -1.3);

			\draw [dashed] (8.45,0.6) -- (9.25, 1.55) -- (9.6, 2.7);
			\draw [dashed] (8.45,-0.6) -- (9.25, -1.55) -- (9.6, -2.7);
			\draw [dashed] (8.75,0) -- (10,0) -- (11.2,-0.9);

			\node at (4,1.6) {2-handle};
			\node at (4,0) {1-handle};
			\node at (4,-1.6) {2-handle};

			\draw [->] (0.7,-1.6) -- (0.4, -0.8);
			\draw [->] (0.7,1.6) -- (0.4, 0.8);
			\node at (-0.4,-1.2) {0-handle};
			\draw [->] (7.2, 1.45) -- (7.6, 0.75);
			\draw [->] (7.2, -1.45) -- (7.6, -0.75);
			\node at (8.2,-1.4) {0-handle};
			\draw [->] (9.4, 1) -- (8.7, 0.4);
		\end{tikzpicture}
		\caption{}
		\label{figpyr}
	\end{figure}

	This at first seems a pyrrhic victory because the 1-handles must correspondingly stretch. However, the 1-handles are now pleated along the 1D core of the handle and radially compressed in the $(d-1)$-cocore direction. In co-core directions where pleating does not shrink distances, we rely on radial compression as in the 0-handle case. Pleating means the differential becomes rank $d-1$ along a sequence of sphere $S^{d-1}$ embedded transverse to the core of the one handle. There are cusp singularities along an equatorial $S^{d-2}$ subspheres. In the $d=2$ case these are $(S^1, S^0)$ pairs; the map is indicated in Fig.~\ref{fig:pleats}, and Fig.~\ref{fig:corepleat} illustrates the pleating along the core in the 1D case, e.g. $X \cong S^1$.

	\begin{figure}[!h]
		\centering
		\begin{tikzpicture}[scale=1.1]
			\draw (-3,-1) -- (-3,1);
			\draw (3,-1) -- (3,1);
			\draw (-3, 1) -- (3,1);
			\draw (-3, -1) -- (3,-1);
			\draw (-2, 0) ellipse (0.3 and 0.75);
			\draw (-1, 0) ellipse (0.3 and 0.75);
			\draw (0, 0) ellipse (0.3 and 0.75);
			\draw (1, 0) ellipse (0.3 and 0.75);
			\draw (2, 0) ellipse (0.3 and 0.75);
			\node at (2,0.75) [circle,fill,inner sep=1pt]{};
			\node at (2,-0.75) [circle,fill,inner sep=1pt]{};
			\node at (1,0.75) [circle,fill,inner sep=1pt]{};
			\node at (1,-0.75) [circle,fill,inner sep=1pt]{};
			\node at (0,0.75) [circle,fill,inner sep=1pt]{};
			\node at (0,-0.75) [circle,fill,inner sep=1pt]{};
			\node at (-1,0.75) [circle,fill,inner sep=1pt]{};
			\node at (-1,-0.75) [circle,fill,inner sep=1pt]{};
			\node at (-2,0.75) [circle,fill,inner sep=1pt]{};
			\node at (-2,-0.75) [circle,fill,inner sep=1pt]{};

			\draw (-3, 0) -- (2.6, 0);
			\draw [->] (-0.5, 1.5) -- (-0.5, 0.2);
			\node at (-0.5,1.7) {core};

			\draw [->] (0, -1.5) arc (-90:10:3.3 and 0.6);
			\draw [->] (0, -1.5) arc (-90:-190:3.3 and 0.6);

			\node at (4.2,0) {1-handle};
			\node at (0,-1.8) {attaching regions};
			\node at (0,-2.4) {(a) arcs are folds, dots are cusps pre-image};
			\node[opacity=0] at (-4.2,0) {1-handle};
		\end{tikzpicture}

		\vspace{2em}

		\begin{tikzpicture}[scale=1.3]
			\draw (-3, -5) -- (3, -5);
			\draw (-3,-4.3) to [out=0, in=150] (-1.91,-3.7) to [out=-60, in=45] (-2.1,-3.98) to [out=225, in=45] (-2.25,-4.12) to [out=225, in=180] (-2.15,-4.3);
			\draw (-2.15,-4.3) to [out=0, in=150] (-1.06,-3.7) to [out=-60, in=45] (-1.25,-3.98) to [out=225, in=45] (-1.4,-4.12) to [out=225, in=180] (-1.3,-4.3);
			\draw (-1.3,-4.3) to [out=0, in=150] (-0.21,-3.7) to [out=-60, in=45] (-0.4,-3.98) to [out=225, in=45] (-0.55,-4.12) to [out=225, in=180] (-0.45,-4.3);
			\draw (-0.45,-4.3) to [out=0, in=150] (0.64,-3.7) to [out=-60, in=45] (0.45,-3.98) to [out=225, in=45] (0.3,-4.12) to [out=225, in=180] (0.4,-4.3);
			\draw (0.4,-4.3) to [out=0, in=150] (1.49,-3.7) to [out=-60, in=45] (1.3,-3.98) to [out=225, in=45] (1.15,-4.12) to [out=225, in=180] (1.25,-4.3);
			\draw (1.25,-4.3) to [out=0, in=150] (2.34,-3.7) to [out=-60, in=45] (2.15,-3.98) to [out=225, in=45] (2,-4.12) to [out=225, in=180] (2.1,-4.3);
			\draw (2.1,-4.3) to [out=0, in=150] (3.19,-3.7) to [out=-60, in=45] (3,-3.98) to [out=225, in=45] (2.85,-4.12) to [out=225, in=180] (2.95,-4.3);

			\draw [->] (0, -4.4) -- (0, -4.8);

			\node at (0,-5.6) {(b) This projection indicates $p$ restricted to};
			\node at (0.4,-6) {the core, the folds are gradually reduced};
			\node at (0.3,-6.4) {to nothing as one moves parallel to the};
			\node at (-1,-6.8) {core on either side};
		\end{tikzpicture}
		\caption{}
		\label{fig:pleats}
	\end{figure}

	Next, the 2-handles are pleated. When $d=2$ this has a geometry reminiscent of a bellows, see Fig.~\ref{figbellows}.

	\begin{figure}[!ht]
		\centering
		\begin{tikzpicture}
			\draw (0,0) circle (1);
			\draw (0,0) circle (1.25);
			\draw (0,0) circle (1.5);
			\draw (0,0) circle (1.7);
			\draw (0,0) circle (1.875);
			\draw (0,0) circle (2);
			\draw (0,0) circle (2.1);

			\node at (-4,0.9) {pairs of fold circles};
			\node at (-4,0.5) {(no cusps when $d=2$)};
			\draw [->] (-2.2, 0.9) -- (-1.65, 0.7);
			\node at (4,1.2) {attaching region};
			\draw [->] (2.6, 1.2) -- (1.9,1);
			\node at (3.6,0) {2-handle};
			\node at (0,-2.5) {(a)};
			\node[opacity=0] at (5.9,0) {A};
		\end{tikzpicture}

		\vspace{2em}

		\begin{tikzpicture}
			\draw (-4,0) -- (4,0);
			\draw (0,0) -- (0,3);
			\draw (0,0.5) -- (0.1,0.6);
			\draw (0,0.5) -- (-0.1,0.6);
			\draw [->] (-0.5,2.9) arc (180:360:0.5 and 0.35);
			\node at (1.2,2.75) {rotate};

			\draw (-2,2) -- (2,2);
			\draw (-2,2) arc (90:270:0.125);
			\draw (-2,1.75) -- (-0.4,1.75);
			\draw (-0.4,1.75) arc (90:-90:0.125);
			\draw (-0.4,1.5) -- (-2.4,1.5);
			\draw (-2.4,1.5) arc (90:270:0.125);
			\draw (-2.4,1.25) -- (-0.8,1.25);
			\draw (-0.8,1.25) arc (90:-90:0.125);
			\draw (-0.8,1) -- (-2.8,1);
			\draw (-2.8,1) arc (90:270:0.125);
			\draw (-2.8,0.75) -- (-1.2,0.75);
			\draw (-1.2,0.75) arc (90:-90:0.125);
			\draw (-1.2,0.5) -- (-3.4,0.5);

			\draw (2,2) arc (90:-90:0.125);
			\draw (2,1.75) -- (0.4,1.75);
			\draw (0.4,1.75) arc (90:270:0.125);
			\draw (0.4,1.5) -- (2.4,1.5);
			\draw (2.4,1.5) arc (90:-90:0.125);
			\draw (2.4,1.25) -- (0.8,1.25);
			\draw (0.8,1.25) arc (90:270:0.125);
			\draw (0.8,1) -- (2.8,1);
			\draw (2.8,1) arc (90:-90:0.125);
			\draw (2.8,0.75) -- (1.2,0.75);
			\draw (1.2,0.75) arc (90:270:0.125);
			\draw (1.2,0.5) -- (3.4,0.5);

			\node at (0,-0.5) {(b) Illustration should be rotated around the $z$-axis};
			\node at (0.35,-0.9) {and then vertically projected to pleat a 2-handle};
		\end{tikzpicture}
		\caption{}
		\label{figbellows}
	\end{figure}

	For a 3D 2-handle the pairs of circle folds become torus folds with two longitudinal curves on each torus becoming the source of cusp singularities, see Fig.~\ref{fig3d}.

	\begin{figure}[!ht]
		\centering
		\begin{tikzpicture}[scale=1.1]
			\draw (-3,1) -- (-3,-1);
			\draw (3,1) -- (3,-1);

			\draw (3,-1) arc (0:-180:3 and 0.35);
			\draw (0,1) ellipse (3 and 0.3);

			\draw (-2.35, -0.2) ellipse (0.05 and 0.6);
			\draw (2.35, -0.2) ellipse (0.05 and 0.6);
			\draw (2.35, -0.8) arc (0:-180:2.35 and 0.25);
			\draw (2.35, 0.4) arc (0:-180:2.35 and 0.2);
			\draw [dashed] (2.35, 0.4) arc (0:180:2.35 and 0.2);

			\draw (1.7,-0.3) ellipse (0.05 and 0.35);
			\draw (-1.7,-0.3) ellipse (0.05 and 0.35);
			\draw (-1.7, -0.65) arc (180:360:1.7 and 0.15);
			\draw (1.7, 0.05) arc (0:-180: 1.7 and 0.15);
			\draw [dashed] (1.7, 0.05) arc (0:180:1.7 and 0.1);

			\draw (1.1, -0.4) ellipse (0.05 and 0.2);
			\draw (-1.1, -0.4) ellipse (0.05 and 0.2);
			\draw (1.1, -0.2) arc (0:-180:1.1 and 0.1);
			\draw [dashed] (1.1, -0.2) arc (0:180:1.1 and 0.05);
			\draw (1.1, -0.6) arc (0:-180:1.1 and 0.1);

			\node at (4.82,0.5) {3D 2-handle};
			\node at (4.7,0) {attaching 1};
			\node at (5.02,-0.4) {region = $S^\prime \times I$};
			\draw [->] (3.8,0) -- (3.1, -0.1);
			\node[opacity=0] at (-5.5,0) {region};
		\end{tikzpicture}
		\caption{}
		\label{fig3d}
	\end{figure}

	The general prescription for a $d$-dimensional $k$-handle, i.e. the pair $(D^k \times D^{d-k}, \de D^k \times D^{d-k})$ is to pleat in the core $(D^k)$ direction with (fold locus, cusp locus) a sequence of embedded pairs
	\begin{equation}
		\begin{split}
			(S^{d-1}, S^{d-2}), & \text{\quad for } k=1 \\
			(S^{k-1} \times S^{d-k}, S^{k-1} \times S^{d-k-1}), & \text{\quad for } k\geq 2
		\end{split}
	\end{equation}
	and compress radially in the cocore $(D^{d-k})$ direction.
\end{proof}

Theorems \ref{mpleating} and \ref{functionspace} guide the stabilizations necessary to prove:

\begin{theorem}\label{deform}
	Given $\alpha = \alpha_0$, an $R$-QCA on a $d$-dimensional Riemannian manifold $X$, for sufficiently small $R$ there is a deformation of $\alpha_0$ through QCA of range $\leq 2R$ to $\alpha_1$ with range $\leq \frac{R}{2}$ (or any other specific fraction such as $\frac{R}{n}$ of $R$). Similarly, if $\alpha_0^t$ is a $k$-parameter family, $t \in D^k$, of QCA with range $\leq R$ for all $t$ and range $\leq \frac{R}{n}$ for $t \in \de D^k$, then there is a deformation continuous in the entire $t$-family and the identity near $\de D^k$ which reduces the range to $\frac{2R}{n}$ over the entire family. If we do not require the deformation to be the identity near $\de D^k$ the range can be reduced to $\frac{R}{m}$ over the entire family for any $m$.
\end{theorem}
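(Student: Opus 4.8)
\emph{Proof proposal.} The plan is to realize the range reduction as the pullback of $\alpha_0$ under a pleating, and then to exhibit that pullback as an honest deformation built handle by handle. First I would apply Theorem \ref{mpleating} with $\epsilon = \tfrac{1}{2}$ and $\epsilon' \approx R$ to obtain a pleating $p\colon X\ra X$, homotopic to $\mathrm{id}_X$, whose graph $X_p=\{(x,p(x))\}$ carries the metric $g_p=\pi_2^\ast(g)$ making $\pi_1\colon X_p\ra X$ coarsely compressive. I would transport the qudit system of $\alpha_0$, which lives over the target copy of $X$, onto $X_p$ by placing over each point $(x,p(x))$ the qudits that $\alpha_0$ assigns to $p(x)$; since $p$ is generically many-to-one along its fold and cusp loci, this duplicates qudits and is realized by stabilization. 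Reading the result over the source $X$ via the diffeomorphism $\pi_1$ defines $\alpha_1$. Two sites coupled by $\alpha_1$ have $g_p$-distance $\le R$ (this is just the range of $\alpha_0$ pulled back isometrically along $\pi_2$), so by the compression inequality their source distance satisfies $\mathrm{dist}_g \le \epsilon\,\mathrm{dist}_{g_p}\le R/2$ once $R\ge\epsilon'$; couplings at scale below $\epsilon'$ are handled directly by the local fold model. Taking $\epsilon=1/n$ gives range $R/n$.

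To make this a \emph{deformation} rather than a single (singular) pullback, I would choose a path of pleatings $p_s$, $s\in[0,1]$, from $p_0=\mathrm{id}_X$ to $p_1=p$, and build it handle by handle exactly as in the proof of Theorem \ref{mpleating}: first radial compression of the $0$-handles, then pleating of the $1$-handles along their cores (with fold locus $S^{d-1}$ and cusp locus $S^{d-2}$) together with radial cocore compression, and so on up to the $d$-handles. Over a single handle the pleating is given by the explicit normal forms (immersion, fold, cusp), and each fold that develops is reproduced on qudits by the four allowed operations of \eqref{operations}: stabilization to create the new sheet's qudits, a bounded-depth circuit over the handle implementing the local copying and rearrangement of $\alpha_0$ onto the sheets, a diffeomorphism isotopic to the identity for the geometric motion and radial compressions, and destabilization where a sheet is no longer needed. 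Since each $h_j$ is a disjoint union of balls separated by $>10\cd 2^d R$, all handles of a fixed index are processed simultaneously as one local deformation, so the whole process is a composition of $c_1=d+1$ local deformations. The handle sizes and separations carry the $2^d$ budget precisely so that compressing index-$j$ directions stretches the attached higher-index handles by at most a bounded factor; the cumulative worst case keeps the range $\le 2R$ throughout, while the final compression delivers $\le R/2$. The invariant I would maintain is that operators supported near any handle core remain supported well within that handle at every stage.

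For the parameterized statement I would invoke Theorem \ref{functionspace}: the space of pleatings homotopic to $\mathrm{id}_X$ is contractible and locally contractible, and pleatings may be constructed relative to the identity on open sets. Hence I can choose a family $p^t$, $t\in D^k$, of such pleatings depending continuously on $t$ and equal to $\mathrm{id}_X$ for $t$ near $\de D^k$. Running the handle-by-handle construction with $p^t$ produces a deformation continuous in the whole family and equal to the identity near $\de D^k$; since the inputs there already have range $\le R/n$ and the interior is compressed, the factor-of-two slack absorbs both the transient stretching and the interpolation to the identity, giving range $\le 2R/n$ over all of $D^k$. If the deformation need not be the identity near $\de D^k$, I would instead apply a single strong pleating (compression $1/m$ from Theorem \ref{mpleating}) uniformly in $t$, which compresses every member, including those already small on $\de D^k$, down to range $R/m$.

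I expect the main obstacle to be the faithful, uniform realization of the fold and cusp sheet-merging as genuine local deformations: stabilizing exactly the right qudits, using a bounded-depth circuit over each handle to reproduce precisely the pulled-back action $\pi_2^\ast\alpha_0$ on the several sheets, and gluing these local models consistently across the attaching regions of overlapping handles, all while holding the intermediate range below $2R$ against the compounding stretches on higher-index handles (and continuously in $t$ in the family case). Checking that the normal forms of a pleating can always be matched by such circuits, compatibly across handles, is the technical heart of the argument.
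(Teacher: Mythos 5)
Your geometric scaffolding — pleat via Theorem \ref{mpleating}, work handle by handle within the separation budget, and get the parametric statements from Theorem \ref{functionspace} — matches the paper's, but the one mechanism that makes the pleated object reachable by a \emph{deformation} is absent, and it cannot be deferred as a technicality. The paper never ``copies'' $\alpha_0$ onto new sheets. Its proof rests on a single enabling fact: $\alpha \otimes \alpha^{-1}$ is a quantum circuit of depth $O(1)$ with gate range bounded by $R$ \cites{arrighi2011unitarity,kcomm}, and this circuit can be \emph{truncated at the folds} of the pleats (Theorem 2.3 of \cite{fh}). With this, the operations \eqref{operations} suffice: stabilize with identity sheets, apply the (truncated) circuit to create \emph{canceling pairs} $\alpha,\alpha^{-1}$ supported on the pleats, slide the original $\alpha$ onto the pleated configuration as in Fig.~\ref{figpleat}, then cancel the leftover pairs and destabilize. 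Your corresponding step — ``a bounded-depth circuit over the handle implementing the local copying and rearrangement of $\alpha_0$ onto the sheets'' — is exactly what this swindle is designed to replace, and as stated it is not available: a circuit acting on freshly stabilized ancillas (which carry the identity QCA) can only produce a QCA that is circuit-equivalent to the identity there, so a sheet carrying a lone copy of $\alpha_0$ (say a shift, with nonzero GNVW index) can never be created this way; only pairs with canceling data can. You acknowledge this as ``the technical heart'' in your final paragraph, but without the $\alpha\otimes\alpha^{-1}$ fact the handle-by-handle realization has no starting point, so the gap sits at the core of the argument rather than at its periphery.

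There is a second, sharper problem with taking the honest pullback $\pi_2^*\alpha_0$ as the target $\alpha_1$. On sheets where $p$ reverses orientation, the pullback carries the reflected copy $\alpha^*$ (in the notation of section \ref{cft}), so a fold creates the pair $(\alpha,\alpha^*)$; producing that pair from the identity by a circuit requires $\alpha\otimes\alpha^*$ to be a circuit, i.e.\ $\alpha^* \cong \alpha^{-1}$ modulo circuits. As section \ref{cft} explains, this is precisely the open question of whether blending equivalence implies circuit equivalence, and it is not known to hold. The paper's construction deliberately avoids it: its pleated QCA carries $\alpha$ and $\alpha^{-1}$ on alternating sheets (equivalently $\alpha$ and $\alpha^{-*}$ when read along the pleat), which differs from your $\pi_2^*\alpha_0$ by a copy of $(\alpha^2)^*$ on every reversed sheet — not a circuit in general (e.g.\ for the order-four QCA $\gamma$ of \cite{haah2019clifford}). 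So your proposed $\alpha_1$ is not merely hard to reach by a deformation; it may not be deformation-equivalent to $\alpha_0$ at all, whereas the paper's is so by construction. The repair is to give up the pullback as the definition of $\alpha_1$ and instead define $\alpha_1$ as the output of the swindle: the pleating of Theorem \ref{mpleating} dictates only the combinatorics of where the $(\alpha,\alpha^{-1})$ pairs are laid down and folded, and the compression estimate is then applied to that object exactly as you applied it to the pullback.
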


\begin{proof}[Proof sketch]
	The key idea is that all pleats discussed in Theorem \ref{mpleating} and its proof can be simulated by introducing, via stabilization, an even $2j$ number of parallel sheets of $X$, with a copy of $\alpha$ on odd numbered copies and a copy of $\alpha^{-1}$ on even numbered copies. This is accomplished by writing $\alpha \circ \alpha^{-1} = \mathrm{id}$ on the odd sheets and simply $\mathrm{id}$ on the even sheets (stabilization) and then using the technique of Fig.~\ref{figpleat} to shift $\alpha$ to the even sheets. Finally the precise combinatorics of any pleating can then be simulated by judiciously returning bits of $\alpha$ to the odd sheets where there should \emph{not} be pleating. Finally cancel $\alpha$ with $\alpha^{-1}$ and destabilize. 
	
	Cancellation relies on the fact that $\alpha \otimes \alpha^{-1}$ is a quantum circuit\cites{arrighi2011unitarity,kcomm} of depth $O(1)$ and gate range bounded by $R$.  This circuit can be truncated	at the folds of the pleats, giving $\alpha\otimes \alpha^{-1}$ on a pair of pleats and $\Id$ outside the pleats (and some circuit near the fold; the details do not matter). This point is explained by Theorem 2.3 and its proof in \cite{fh}, and is illustrated in Fig.~\ref{figpleat} in the case of a single 1D zigzag.

	\begin{figure}[!ht]
		\centering
		\begin{tikzpicture}
			\draw (-4, 4) -- (4,4);
			\node at (0,4.2) {$\alpha$};
			\node at (6.1,4) {$X = \mathbb{R}$ in illustration};

			\draw [->] (0, 3.8) -- (0, 3.2);
			\draw (-4, 3) -- (4,3);
			\draw (-4, 2.7) -- (4,2.7);
			\draw (-4, 2.4) -- (4,2.4);
			\node at (-0.25,2.9) {$\alpha \circ \alpha^{-1}$};
			\node at (-3.6,2.52) {$\alpha$};

			\draw [->] (0,2.3) -- (0,1.7);
			\draw (-4,1.5) -- (4,1.5);
			\draw (-4,1.2) -- (4,1.2);
			\draw (-4,0.9) -- (4,0.9);
			\node at (-0.28,1.38) {$\alpha^{-1}$};
			\node at (-3.6,1.02) {$\alpha$};
			\node at (-0.5,1.65) {$\alpha$};
			\node at (6.1,3.05) {2nd extra sheet of $X$};
			\node at (6.1,2.65) {1st extra sheet of $X$};

			\draw [->] (0,0.7) -- (0, 0.1);
			\draw (-2, -0.2) -- (4, -0.2);
			\node at (-0.5,-0.05) {$\alpha$};
			\draw (-2,-0.2) -- (-3, -0.5) -- (2, -0.5) -- (1, -0.8) -- (-4, -0.8);
			\draw [dashed] (-4, -0.5) -- (-3, -0.5);
			\draw [dashed] (2, -0.5) -- (4, -0.5);
			\node at (-0.28,-0.32) {$\alpha^{-1}$};
			\node at (-3.6,-0.67) {$\alpha$};

			\node[opacity=0] at (-7.5,1) {g};
		\end{tikzpicture}
		\caption{}
		\label{figpleat}
	\end{figure}

	The parametric statements in Theorem \ref{deform} follow from the parametric properties of pleating described in Theorem \ref{functionspace}.
\end{proof}

Theorem \ref{deform} allows us both to implement the sequence definitions of $\lbar{\text{QCA}}$ and also to see that it is redundant. First given $\alpha$, a $R$-QCA on $X$, we can use Theorem \ref{deform} to construct a sequence $\{\lbar{\alpha} = \alpha_0, \alpha_1, \dots \} \in \lbar{\text{QCA}}(X)$. But now suppose $\lbar{\alpha}$ is the mother of another sequence $\{\alpha_0, \alpha_1^\pr, \dots \} \in \lbar{\text{QCA}}(X)$. There is a continuous path of $\frac{R}{2^n}$-QCA, $\alpha^t_n$, joining $\alpha_n$ to $\alpha_n^\pr$, $\alpha_n^0 = \alpha_n$ and $\alpha_n^1 = \alpha_n^\pr$ obtained by concatenating the deformation of $\alpha_0$ to $\alpha_n$ with the inverse of the deformation from $\alpha_0$ to $\alpha_n^\pr$. Theorem $\ref{deform}$ allows us to deform this path to a deformation from $\alpha_n$ to $\alpha_n^\pr$ which never increases the range $R$ beyond $\frac{R}{2^{n-1}}$.
Similarly, given two mother QCA $\alpha_0,\beta_0$ connected by a continuous path of $R$-QCA, there is a continuous path of 
$\frac{R}{2^n}$-QCA from $\alpha_n$ to $\beta_n$.

Thus the different terminal sequences with mother $\lbar{\alpha}$ and range $\frac{R}{2^n}$ are all equivalent: the sequence, in fact, provides no new information and the naive definitions based on individual QCA and bounded depth quantum circuits are essentially correct (given the existence of descendant sequences and their canonical nature).

The official definition of $Q(X) = \lbar{\text{QCA}}(X) / \lbar{\text{fdqc}}(X)$ goes through sequences, hence the bars, but we have seen the bars offer no new information when compared with the naive definitions. The naive definition may well be criticized on the ground that long composition eventually span the ``separate scales" and are not well defined. But they \emph{become} well defined when the existence and uniqueness of coherent sequences is added to the discussion.

In this discussion we have used 1-parameter families of coherent refinements to achieve well-defined group compositions of arbitrary length. Higher $k$-parameter families of refinements show the group law is also continuous when applied to $(k-1)$ parameter families of group elements. We do not explore the implications of higher families here.

{\bf Remark on Pleating Depth:} The pleating trick is an example of the Eilenberg Swindle. As we have explained, each step of the allowed deformations (1-4) is allowed under the definition of coherence.
However, it worth emphasizing that one further application of
the Swindle shows that, so long as we impose no a priori bound on the density of ancillas, the {\it depth} of the resulting
circuit is also $O(1)$.
Suppose that $\alpha_1$ and $\alpha_n$ are connected by a long sequence of QCA $\{\alpha_1, \dots, \alpha_n\}$ with each $\alpha_{i+1}$ derived from $\alpha_i$ by a fdqc. While at first glance the path from $\alpha_1$ to $\alpha_n$ seems to require depth $O(n)$, using the swindle $\alpha_1$ and $\alpha_n$ in fact can be connected by a depth $O(1)$ circuit. An initial circuit of depth $O(1)$ can be created from the identity as
\[
(\alpha_1^{-1} \otimes \alpha_2) \otimes (\alpha_3^{-1} \otimes \alpha_4) \otimes \cdots \otimes (\alpha_{n-1}^{-1} \otimes \alpha_n)
\]

Regrouping and canceling using a second depth $O(1)$ circuit yields
\begin{align*}
	\alpha_1^{-1} \otimes (\alpha_2 \otimes \alpha_3^{-1}) & \otimes (\alpha_4 \otimes \alpha_5^{-1}) \otimes \cdots \otimes \alpha_n \\
	& \downarrow \\
	\alpha_1^{-1} & \otimes \alpha_n
\end{align*}
which can then be used to pass from $\alpha_1$ to $\alpha_1 \otimes (\alpha_1^{-1} \otimes \alpha_n)$ to $\alpha_n$, all with depth $O(1)$.

Thus the 4 operations (lines \ref{operations}) suffice to define a single-shot equivalent relation on QCA adequate for all our theorems, provided an unbounded ancilla density is permitted. If it is not, the alternative definition via deformation is to be performed.

We conclude this section with a comment on normality. When deforming a quantum circuit or even a QCA there is no difference between allowing deformation at the beginning, middle or end of the circuit (or QCA). For example, suppose we wish to insert a swap $s$ (itself deformable from identity) between $\alpha$ and $\alpha^{-1}$ we can do this in three equivalent ways:
\begin{equation}
	(\alpha s \alpha^{-1})\alpha \alpha^{-1} = \alpha s \alpha^{-1} = \alpha \alpha^{-1}(\alpha s \alpha^{-1})
\end{equation}
For this reason we have \emph{not} been explicit in the main text as to where deformation occurs.

\textbf{Product of Families Equals Family of Product:}
Theorem \ref{deform} also implies another result: if $\lbar{\alpha},\lbar{\beta},\lbar{\gamma}$ are coherent families, with $\alpha_0 \circ \beta_0=\gamma_0$, then, using ancillas, there is a continuous path from $\alpha_n \circ \beta_n$ to $\gamma_n$.
To prove this, note that there is a continuous path of $R_n$-QCA from $(\alpha_n \circ \beta_n)\otimes \Id$ to
$\alpha_n \otimes \beta_n$,
where we have added one ancilla degree of freedom for each physical degree of freedom.
Then, there is a deformation from $\alpha_n\otimes \beta_n$ to $\alpha_0\otimes \beta_0$, a deformation from $\alpha_0 \otimes \beta_0$ to $\gamma_0 \otimes \Id$, and finally a deformation from $\gamma_0 \otimes \Id$ to $\gamma_n \otimes \Id$.
Composing these formations and applying theorem \ref{deform} gives the desired result.

\section{Coherent Families Using Circuits}
\label{circ}
The previous section showed how to construct a coherent family given a mother QCA.  Theorem \ref{deform} then showed that given two path coherent familes, $\lbar{\alpha},\lbar{\beta}$ with $\alpha_0,\beta_0$, connected by a path of $R$-QCA, then $\alpha_n,\beta_n$ were connected by a path of $\frac{R}{2^n}$-QCA.  In this section we show a similar result for {\it circuit-coherent} families.

First let us show the claim that a circuit with range bounded by $R$ (or more generally the forward lightcone of every qudit having diameter bounded by $R$) can be written in depth $d+1$ using gates of range $O(R)$, assuming an appropriate handle decomposition on a $d$-dimensional control space.
To show this, decompose the qudits into disjoint sets $S_0,S_1,\ldots,S_d$ (in this case, each set $S_j$ will represent the qudits in the union of $j$-handles).
Let $V_j$ be the unitary implemented by the product of gates in the quantum circuit which are supported on $S_j$ and which are not in the forward lightcone of any $S_k$ for $k>j$; here we take the product of gates in the obvious order.
Then, if the circuit implements unitary $U$ we have $U=V_d V_{d-1} \ldots V_0$.
Assuming each $S_j$ is a union of sets of diameter $O(R)$, with distance between the sets larger than $10R$, then each $V_j$ can be regarded as a one-round quantum circuit using gates of diameter $O(R)$.
In this subsection, then, we will assume such a handle decomposition exists and treat all notions of range for a quantum circuit equivalently.

First, we emphasize that the construction of coherent families in the previous section gives a circuit-coherent family since each deformation can be implemented by circuits or stabilization.  Let us say that two QCAs $\alpha,\beta$ are ``equal up to a circuit of depth $\dep$ and range $R$" if $\alpha \circ \beta^{-1}$ is equal to such a circuit.
This definition raises
the following question: suppose
$\lbar{\alpha},\lbar{\beta}$ are two circuit coherent families, with $\alpha_0,\beta_0$ equal up to a circuit of depth $O(1)$ and range $O(R)$.  Is it the case that
$\beta_n,\alpha_n$ are equal to a circuit of depth $O(1)$ and range $O(\frac{R}{2^n})$?
A second question involves products: suppose $\lbar{\alpha},\lbar{\beta},\lbar{\gamma}$ are three circuit coherent families, with $\alpha_0 \circ \beta_0$ equal to $\gamma_0$ up to a circuit of depth $O(1)$ and range $O(R)$.  Is it the case that
$\alpha_n \circ \beta_n$ is equal to $\gamma_n$
 up to a circuit of depth $O(1)$ and range $O(\frac{R}{2^n})$?

In this section we answer both questions affirmatively.  Note first that in both cases it is straightforward to show a slightly weaker equivalence up to a circuit of depth $O(1)$ and range $O(R)$, as we sketch in the next two paragraphs.  We then prove a lemma \ref{circcompress} which shows that this equivalence implies equivalence up to a circuit of depth $O(1)$ and range $O(\frac{R}{2^n})$.  The key point which allows us to apply lemma \ref{circcompress} is that even though the circuit may
use gates of range $O(R)$, the circuit acts as a QCA of much smaller range, $O(\frac{R}{2^n})$.  For example, if
$\alpha_n,\beta_n$ are equal up to a circuit and $\alpha_n,\beta_n$ both have range $\frac{R}{2^n}$, then the circuit acts as a QCA with range at most $2\frac{R}{2^n}$.

Consider first the first question above, equivalence of $\beta_n$ and $\alpha_n$.  Note first that $\alpha_0$ is $O(R)$-circuit equivalent to $\alpha_n$ by definition of a circuit equivalent family: $\alpha_i$ and $\alpha_{i+1}$ are $O(\frac{R}{2^i})$-circuit equivalent, and summing over $i$ this gives an equivalence by a circuit with lightcone bounded by $O(R)$.
Similarly, $\beta_0$ is $O(R)$-circuit equivalent to $\beta_n$ and $\alpha_0$ is $O(R)$-circuit equivalent to $\beta_0$
by assumption.
Combining these equivalences gives the equivalence between $\alpha_n$ and $\beta_n$.  Of course, the equivalence between $\alpha_0$ and $\alpha_n$ may involve stabilization or destabilization, and similarly for the equivalence between $\beta_0,\beta_n$; if the resulting circuit from $\alpha_n$ to $\beta_n$ uses ancillas, these may be removed using the techniques of section \ref{arqc} given Assumption \ref{a1}.

For the second question, $\alpha_n$ is $O(R)$-circuit equivalent to $\alpha_0$; let $\mu$ be the QCA corresponding to this circuit.
Similarly, $\beta_n$ is $O(R)$-circuit equivalent to $\beta_0$; call the corresponding QCA $\nu$.
Then, $\alpha_n \circ \beta_n=(\mu \circ \alpha_0) \circ (\nu \circ \beta_0)$.  However, $\alpha_0 \circ \nu=\nu' \circ \alpha_0$ where $\nu'=\alpha_0 \circ \nu \circ \alpha_0^{-1}$ is a quantum circuit of range $O(R)$.
So, $\alpha_n \circ \beta_n$ is equivalent to $\gamma_0$ up to a circuit of range $O(R)$; combining with the equivalence between $\gamma_0$ and $\gamma_n$ gives the desired result.

Finally we prove the lemma.
\begin{lemma}
\label{circcompress}
Let $C$ be some circuit, with the range of the lightcone of the circuit bounded by $R$, for some sufficiently small $R$.
 Assume  that
the circuit acts as a QCA with range $r \ll R$.
Then, there exists a circuit $C'$ that implements the same unitary as $C$, 
with the range of the lightcone of $C'$ bounded by $r$.
\begin{proof}
Let $\alpha_C$ denote the QCA corresponding to circuit $C$.
We will apply the pleating construction above to construct a pleated QCA $\alpha_{C_P}$.
Previously, we applied the pleating construction to reduce the range of a QCA.  However, the pleating construction also reduces the range of the gates in the circuit: each occurrence of $\alpha_C$ or $\alpha_C^{-1}$ in the pleating construction away from the folds can be decomposed into gates of $C$ or $C^{-1}$, and the deformation of the pleat reduces the range of these gates.
With this decomposition there is a ``pleated circuit" $C_P$ which corresponds to QCA $\alpha_{C_P}$.

We construct $C_P$ as follows.
For each occurrence of $\alpha$ or $\alpha^{-1}$ in the pleating construction (i.e., for each sheet), we decompose the corresponding circuit $C$ or $C^{-1}$ as
a product $C_{near} \circ C_{far}$ where $C_{near}$ includes gates near the folds, i.e., within distance $O(R)$ of the folds, and $C_{far}$ includes the remaining gates far from the folds which are not in the forward lightcone of $U_{near}$.
Then we write $C_P=C_{P,near} \circ C_{P,far}$ where $C_{P,far}$ is obtained in the obvious way from the gates in each $C_{far}$, i.e., for each sheet in the pleat, we insert the gates of $C_{far}$ for that sheet.
To construct $C_{P,near}$, recall the construction near the folds for an arbitrary QCA $\alpha$.  The product $\alpha \otimes \alpha^{-1}$ near the fold is a quantum circuit, as  we write $\alpha \otimes \alpha^{-1}=\Bigl((\alpha \otimes \Id) \circ \Swap \circ (\alpha^{-1} \otimes \Id)\Bigr) \circ \Swap$, where $\Swap$ swaps degrees of freedom in the two tensor factors.
The operator $\Bigl((\alpha \otimes \Id) \circ \Swap \circ (\alpha^{-1} \otimes \Id)\Bigr)$ is a circuit for any $\alpha$, as is 
$\Swap$, so this gives the circuit near the folds.

We will choose the pleats so that the range of the lightcone in $C_P$ is bounded by $O(\frac{R}{2^n})$ for $n$ large enough that $\frac{R}{2^n}<r$.  We can regard $\alpha_C$ as the mother of a coherent family with $\alpha_{C_P}$ being its $n$-th member.  Hence, since $\alpha_C$ is a QCA of range $r$, by definition of a coherent family there is a circuit $D$ of lightcone range $O(r)$
from $\alpha_{C_P}$ to $\alpha_{C}$, i.e., $D$ acts as $\alpha_C \circ \alpha_{C_P}^{-1}$.
Finally, let
\begin{align}
C'=D \circ C_P.
\end{align}
\end{proof}
\end{lemma}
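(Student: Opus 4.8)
The plan is to run the pleating construction of Section~\ref{sec:cf} at the level of circuits rather than merely at the level of QCA, and then to close the loop using the circuit-coherence of the family whose mother is $\alpha_C$, the QCA implemented by $C$. First I would set $\alpha_C$ to be the QCA corresponding to $C$; by hypothesis it has range $r$. Regarding $\alpha_C$ as a mother QCA, Theorem~\ref{deform} produces a coherent (indeed circuit-coherent, by the remark opening Section~\ref{circ}) family whose $n$-th member $\alpha_{C_P}$ has range $O(r/2^n)$. The refinement I need is that this pleating can be carried out at the circuit level: since each occurrence of $\alpha_C$ or $\alpha_C^{-1}$ on a sheet of the Eilenberg swindle is itself the circuit $C$ (or $C^{-1}$), the geometric compression of each pleat shrinks the gates of these copies along the compressed co-core directions. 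This yields a concrete pleated circuit $C_P$ implementing $\alpha_{C_P}$.

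Second, I would make $C_P$ explicit by separating fold from non-fold contributions. Writing each sheet's copy of $C$ as $C_{near}\circ C_{far}$ --- with $C_{near}$ collecting the gates within distance $O(R)$ of the folds and $C_{far}$ the remaining gates not in the forward lightcone of $C_{near}$ --- I set $C_P = C_{P,near}\circ C_{P,far}$. The far part $C_{P,far}$ is assembled sheet by sheet from the now-compressed $C_{far}$ gates. For the near part I use the swindle cancellation $\alpha_C \otimes \alpha_C^{-1} = \bigl((\alpha_C \otimes \Id)\circ \Swap \circ (\alpha_C^{-1}\otimes \Id)\bigr)\circ \Swap$; each bracketed factor is a genuine circuit of gate range $O(r)$, so $C_{P,near}$ is a bona fide circuit. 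Choosing the pleats fine enough, i.e.\ taking $n$ large so that $r/2^n < r$, guarantees that the lightcone range of $C_P$ is $O(r/2^n)<r$.

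Third, I would close the loop. The circuit $C_P$ implements $\alpha_{C_P}$, not $\alpha_C$, so I must return from the fine descendant to the mother. Because $\alpha_C$ is the mother and $\alpha_{C_P}$ its $n$-th member in a circuit-coherent family, Definition~\ref{circcohfam} together with the telescoping argument analogous to the remark following Definition~\ref{cohfam} supplies a circuit $D$ of lightcone range $O(r)$ implementing $\alpha_C \circ \alpha_{C_P}^{-1}$. Setting $C' = D\circ C_P$ then gives a circuit implementing $\alpha_C \circ \alpha_{C_P}^{-1}\circ \alpha_{C_P} = \alpha_C$, the same unitary as $C$, with total lightcone range $O(r)$ as desired.

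I expect the main obstacle to be the second step: verifying that the pleating deformation, originally engineered to shrink the range of a \emph{QCA}, genuinely compresses the gate ranges of the underlying \emph{circuit} and not merely the net spreading of the automorphism. The delicate point is the bookkeeping near the folds, where the long-range cancellations of $\alpha_C \otimes \alpha_C^{-1}$ must be reorganized into short-range circuit form via the swap identity without reintroducing a large lightcone. Concretely, one has to check that truncating the swindle at the folds leaves exactly $\alpha_C \otimes \alpha_C^{-1}$ on a pair of pleats and $\Id$ outside (plus some harmless circuit localized at each fold), which is the truncation property established by Theorem~2.3 and its proof in \cite{fh}; granting this, the assembly of $C_{P,near}$ and $C_{P,far}$ into a circuit of lightcone range $O(r/2^n)$ is routine.
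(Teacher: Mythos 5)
Your proposal follows the paper's proof almost step for step: pleat the circuit itself (not merely the QCA), split each sheet's copy of $C$ into $C_{near}\circ C_{far}$, handle the folds via the swap identity $\alpha\otimes\alpha^{-1}=\bigl((\alpha\otimes\Id)\circ\Swap\circ(\alpha^{-1}\otimes\Id)\bigr)\circ\Swap$, and close with $C'=D\circ C_P$, where $D$ is the $O(r)$-range circuit supplied by regarding $\alpha_C$ as the mother of a circuit-coherent family with $\alpha_{C_P}$ as its $n$-th member. Your identification of the main obstacle (that pleating must compress gate ranges, with the truncation property of Theorem 2.3 of \cite{fh} controlling the folds) is also exactly the paper's.

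There is, however, one concrete error at the key quantitative step. You write that ``taking $n$ large so that $r/2^n < r$'' guarantees the lightcone range of $C_P$ is $O(r/2^n)<r$. The condition $r/2^n<r$ is vacuous (it holds for every $n\geq 1$), and the bound $O(r/2^n)$ on the lightcone of $C_P$ is wrong: the gates of $C_{P,far}$ are compressed copies of the gates of $C$, which live at scale up to $R$, not $r$, so after $n$ levels of pleating they have range $O(R/2^n)$. What is $O(r/2^n)$ is the range of the \emph{QCA} $\alpha_{C_P}$; you have conflated the two scales whose separation $r\ll R$ is the entire content of the lemma. The paper's condition is $R/2^n<r$, i.e.\ $n\gtrsim\log_2(R/r)$, a genuinely nontrivial pleating depth. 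Under your condition as written (e.g.\ $n=1$), the circuit $C_P$ would have lightcone $O(R/2)\gg r$, and $C'=D\circ C_P$ would fail the claimed bound. The fix is a one-symbol substitution ($R$ for $r$ in the bound on $C_P$ and in the condition on $n$), after which your argument coincides with the paper's.
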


\section{Families of Translation Invariant QCA}
\label{cft}

In this section the control space is a torus unless noted otherwise.
The set of sites will form a hypercubic lattice
with $L_1 \times L_2 \times \cdots \times L_d$ sites.
The distance between neighboring sites will be $L_j^{-1}$ along $j$-th direction.
A translation invariant (TI) QCA
is completely specified by the images of single qudit operators at one site and the numbers $L_j$;
the TI QCA is invariant under any translation by one site along any of $d$ directions.
Thus, given a TI QCA $\alpha_0$ of range $R \le 1/100$ on, say, $100^d$ sites,
the ``obvious'' family of a TI QCA is determined by a sequence of numbers of sites.

In Section~\ref{sec:def} we pointed out that the obvious family 
of a TI QCA on $d$-torus may not be coherent.
Given a $3$-dimensional TI nontrivial QCA $\gamma$ of order~4,
we considered an example that is a $4$-dimensional TI QCA $\bar \beta$ constructed by stacking $\gamma$.
We tuned the boundary conditions ($L_j$)
such that the number of $\gamma$ in $\beta_i$ does not conform with the order of $\gamma$,
rendering $\bar \beta$ incoherent.
However, in this example a \emph{sub}sequence of $\bar \beta$ \emph{is} coherent
as one can insert multiples of four layers of $\gamma$ by a quantum circuit.
So, if we are to make a general claim on the coherence of a TI QCA,
the best we should aim for is to show that some subsequence of the obvious family is coherent.
Here we give a result in that direction.

Note that there is an open possibility that there exists a TI QCA $\eta$ of infinite order,
i.e., $\eta^n$ is a quantum circuit if and only if $n = 0$.
If such $\eta$ exists, then the stack of $\eta$, which is TI, will not contain any coherent subsequence.
In addition, the hypothetical $\eta$ of infinite order makes two notions of equivalence of QCA distinct:
One notion of equivalence is by quantum circuits that we are considering in this paper,
and the other is by ``blending''~\cite{fh} --- two QCAs $\alpha$ and $\beta$ 
of range $R$ blend between disjoint regions $A$ and $B$
if there is an interpolating $R$-QCA that agrees with $\alpha$ on $A$ and $\beta$ on $B$.
The circuit equivalence implies the blending equivalence.
A stack of some QCA always blends with the identity QCA by terminating the stack,
but the full stack of $\eta$ that has infinite order cannot be a circuit.
The potential distinction between blending equivalence and circuit equivalence
has a footprint in the coherence of TI QCA as we will see now.

\begin{defin}
Let $I = [-1,1]$ and $S^1 = \{ (x,y) \in \mathbb{R}^2 : x^2 + y^2 = 1\}$.
They admit an orientation-reversing diffeomorphism $\ocinv$ by interchanging $x$ with $-x$.
If $\alpha$ is a QCA on $X = X' \times I$ or $X = X' \times S^1$ 
where the geometry of sites is invariant under $\ocinv$ acting on $I$ or $S^1$,
we define $\alpha^*$ to be $\ocinv \circ \alpha \circ \ocinv$, called one-coordinate inversion of $\alpha$.
\end{defin}
Note that for any TI QCA $\alpha$, the product $\alpha \otimes \alpha^*$ blends into the identity:
Just project $S^1$ in $T^d$ down to $I$ by $(x,y) \mapsto x$.
If the blending equivalence implied circuit equivalence, 
then we would always have $\alpha^* \cong \alpha^{-1}$.

In the pleating construction for \emph{the} coherent family of QCA starting with a mother TI QCA,
specialized to the situation where the pleats are all parallel to a coordinate axis,
one half of the added degrees of freedom is acted on by $(\alpha^{-1})^* =: \alpha^{-*}$
and the other half by $\alpha$,
except for ``cusps'' of the pleats.
If the obvious family of some TI QCA is coherent, then, intuitively, 
$\alpha^{-*}$ has to match $\alpha$ up to a finite depth quantum circuit.
Since $\alpha^{-*}$ blends with $\alpha$,
if the blending equivalence implied circuit equivalence,
then we may expect that the obvious family of TI QCA would always be coherent.
Not knowing this, we instead prove the following theorem, which still bears an interesting corollary.

\begin{figure}
\centering
\includegraphics[width=\textwidth, trim = {130 230 70 0}, clip]{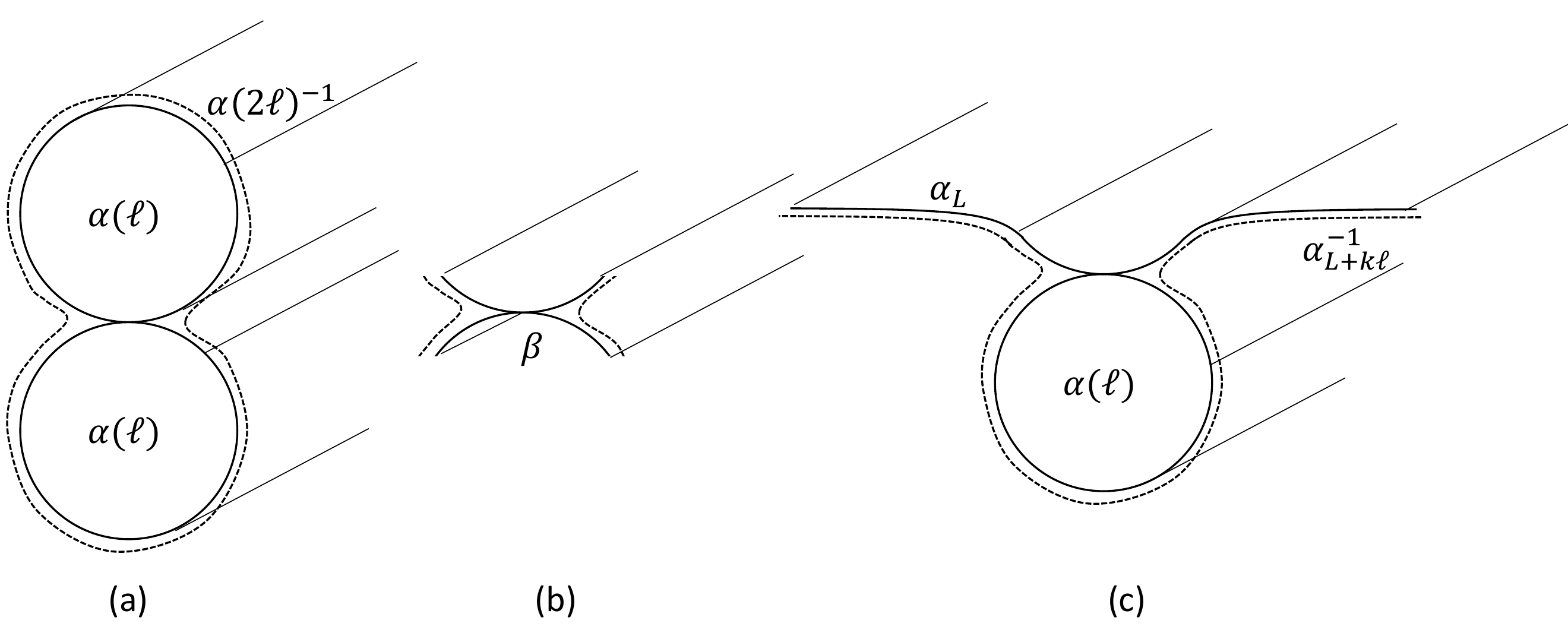}
\caption{Construction of translation invariant pleats. 
Assuming that the dimensionally reduced QCA has finite order in the group of QCA modulo fdqc,
the $k$-th power of~(a) is trivial for some $k < \infty$.
This implies that $k$-th power of~(b) is trivial.
In the specific pleating of~(c), $\beta$ appears exactly $k$ times,
implying the coherence of the obvious family of translation invariant QCA.
}
\label{fig:TIcoherence}
\end{figure}

\begin{theorem}\label{thm:ti-qca-coherence}
Let $\bar \alpha = \{ \alpha_L \}$ on $L^d$ sites ($L = L_0, L_0 + 1, L_0 + 2,\ldots$)
be the obvious family of a TI QCA on an $d$-torus.
Here, $L_0$ is so large that the range of $\alpha$ is at most $1/10$ of the linear size of the torus.
Using the same local action as $\bar \alpha$,
for $j = 1,2,\ldots, d$
we define $\overline{\alpha(\ell,j)} = \{ \alpha_L(\ell,j) \}$ on $L^{j-1} \times \ell \times L^{d-j}$ sites 
to be another obvious family of TI QCA on an $(d-1)$-torus,
obtained by dimensionally reducing in the $j$-th coordinate. 
If for any $j$ and for any sufficiently large $\ell$, 
$\overline{\alpha(\ell,j)}$ has finite order 
in the group of QCA modulo finite depth quantum circuits,
then a subsequence of $\bar \alpha$ is coherent.
\end{theorem}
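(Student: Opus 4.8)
The plan is to realize each refinement step of the obvious family by the translation-invariant pleating of Theorem~\ref{deform}, carried out one coordinate direction at a time, and then to use the finite-order hypothesis to cancel the single defect that the pleating leaves behind. Fix a direction $j$ and a member $\alpha_L$ of the family. Specializing the construction of Section~\ref{sec:cf} so that all pleats are parallel to the $j$-axis and translation invariant in the remaining $d-1$ directions, we stabilize by adjoining parallel sheets of the torus, placing $\alpha$ on half of them and $\alpha^{-*}$ on the other half, as in the discussion preceding the theorem. Away from the cusps the neighboring sheets are paired and deleted by bounded-range circuits, exactly as in Figure~\ref{figpleat}; the construction fails to telescope to the identity only because the reflected factor $\alpha^{-*}$ is not known to be circuit equivalent to $\alpha$, and this failure is concentrated on the cusp locus of the pleats.

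The key point is that this residual defect is $j$-translation invariant and supported in a slab transverse to the $j$-direction, so dimensionally reducing in the $j$-th coordinate turns it into a power $\beta^{N}$ of the reduced QCA $\beta = \alpha(\ell,j)$ on the $(d-1)$-torus, where $\ell$ is the pleat width. We take $\ell \gg R$ (the pleats may be chosen large, $\gg R$), so $\ell$ is among the ``sufficiently large'' widths covered by the hypothesis and $\beta$ is well defined. The exponent $N$ equals the number of pleat units inserted, so we are free to choose it. Taking $N$ to be the order $k_j := \mathrm{ord}(\beta)$, which is finite by hypothesis, the reduced defect $\beta^{k_j}$ is a finite depth quantum circuit on the $(d-1)$-torus; stacking this circuit back along the $j$-direction produces a finite depth circuit of range $O(1/L)$ on the full torus that removes the defect. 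This is precisely the mechanism of Figure~\ref{fig:TIcoherence}: part (a) is $\beta$, its $k$-th power is trivial, part (b) is the suspended defect whose $k$-th power is therefore trivial, and the pleating (c) is arranged so that $\beta$ appears exactly $k = k_j$ times. After absorbing this circuit the pleating exhibits $\alpha_L$ as stably circuit equivalent to the obvious QCA on the lattice refined in the $j$-direction by the factor that $k_j$ pleat units supply (a factor $\geq 2$).

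It remains to assemble the subsequence. Beginning with $L_0$, define $L_{i+1}$ by refining $\alpha_{L_i}$ successively in each of the directions $j = 1,\dots,d$, inserting $k_j$ pleat units in direction $j$; this is legitimate because the hypothesis furnishes a finite $k_j$ for every $j$ and every large pleat width, and only the $d$ directions are involved. Each single-direction refinement is a stable circuit equivalence of depth $O(1)$ and range $O(1/L_i)$ by the previous paragraph, and composing the $d$ of them (their circuits have $O(1)$ depth and act in bounded-range slabs, so they compose to depth $O(1)$) shows that $\alpha_{L_i}$ is stably $c\,(1/L_i)$ circuit equivalent to $\alpha_{L_{i+1}}$ for a constant $c$ independent of $i$. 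Since $L_{i+1} \geq 2 L_i$, the ranges $R_i \sim 1/L_i$ decay exponentially along $\{\alpha_{L_i}\}$, so this subsequence satisfies Definition~\ref{circcohfam} and is circuit, hence path, coherent. Whatever ancillas the stabilizations introduce are stripped off, where necessary, by the methods of Section~\ref{arqc}.

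The step I expect to be the main obstacle is the claim in the second paragraph: that the non-telescoping part of the coordinate-parallel pleating is genuinely localized near the cusp locus, and that its dimensional reduction in direction $j$ is exactly $\beta^{N}$ with $N$ equal to the number of inserted pleats. This demands careful bookkeeping of the fold and cusp strata of Theorem~\ref{mpleating} (the pairs $(S^{k-1}\times S^{d-k},\,S^{k-1}\times S^{d-k-1})$) under the reflection $\ocinv$, using at the folds, but not at the cusps, that $\alpha \otimes \alpha^{-1}$ is a bounded-depth circuit. Pinning the count of $\beta$'s to the pleat number, with no stray boundary correction, is what makes the arithmetic condition $k_j \mid N$ both necessary and achievable, and hence what forces the passage to a subsequence rather than the full family.
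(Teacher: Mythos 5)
Your high-level plan --- insert translation-invariant extra material one coordinate direction at a time, recognize the obstruction as a TI defect on a $(d-1)$-torus, kill it with the finite-order hypothesis, then assemble a subsequence with $L_{i+1}\gtrsim 2L_i$ --- is the same as the paper's. But there is a genuine gap at the step you yourself flag as the main obstacle, and it is not a bookkeeping issue: the defect is misidentified. You claim the non-telescoping part of the coordinate-parallel pleating, dimensionally reduced along $j$, is a power $\beta^N$ of $\beta=\alpha(\ell,j)$, so that the hypothesis applies to it verbatim. It is not. Each pleat leaves one copy of a discrepancy built from $\alpha(\ell,j)$ \emph{together with its one-coordinate inversion}: on a folded-back slab the circuit-constructed pleated QCA carries $\alpha^{-*}$ while the obvious refined QCA carries $\alpha$, so the per-pleat defect is schematically $\alpha(\ell,j)\circ\alpha(\ell,j)^{*}$ (up to fold terms), not $\alpha(\ell,j)$. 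Whether such a reflected combination is circuit-trivial is exactly the blending-versus-circuit-equivalence question the paper says it cannot resolve; if the defect really were a power of $\alpha(\ell,j)$, that entire discussion would be vacuous. This is why the paper does not identify the defect with $\alpha(\ell,j)$: it defines $\beta$ by the figure-``8'' configuration ($\overline{\alpha(2\ell,j)^{-1}}$ on the outer rim against two copies of $\overline{\alpha(\ell,j)}$ on the inner rims) and then \emph{derives} $\bar\beta^{\,k}\cong\Id$, needing the hypothesis at both widths $\ell$ and $2\ell$ and getting only $k\le k'k''$ --- a telltale sign that $\beta$ is not a power of $\alpha(\ell,j)$. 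Your reading of Figure~\ref{fig:TIcoherence} conflates the two objects: part (a) is that three-QCA configuration, not $\alpha(\ell,j)$, and part (b) is the crossing defect. The order bound for the true defect can be established (either the paper's way, or by observing that one-coordinate inversion maps circuits to circuits and the stable quotient is abelian, so $k[\alpha(\ell,j)^{*}]=0$ follows from $k[\alpha(\ell,j)]=0$), but your proposal contains neither argument, and that derivation is the actual content of the theorem.

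Two further problems with the pleating route as you set it up. First, for pleats parallel to a coordinate axis on a torus the folds are entire $(d-1)$-tori and there are no cusps at all; the mismatch is not concentrated on any singular stratum but is spread over the whole folded-back slabs, one copy per pleat, so the ``careful bookkeeping of fold and cusp strata'' you defer to cannot repair the identification. Second, even after the defect is cancelled, the output of Theorem~\ref{deform} is not the obvious QCA $\alpha_{L'}$: its local rule is orientation-reversed on the folded-back fraction of the fine sites, and it agrees with $\alpha_{L'}$ only up to conjugation by the relabeling between the zigzag ordering and the position ordering of the fine lattice; you still owe an argument that this conjugation is invisible modulo circuits (e.g.\ via abelianness of the stable quotient). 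The paper sidesteps both issues by never invoking the pleating construction in its proof: it stays on the obvious lattices throughout, wrapping $\alpha_L$ and $k$ adjacent copies of $\alpha_L(\ell,j)$ by $\alpha^{-1}$ on $L^{j-1}\times(L+k\ell)\times L^{d-j}$ sites, which leaves exactly $k$ copies of the figure-``8'' defect $\beta_L$; these are jointly trivial by the derived order bound, and the inserted copies $\alpha_L(\ell,j)^{\otimes k}$ are themselves removed by a second use of the hypothesis.
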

\begin{proof}
We claim that there is a deformation $\alpha_L$ into $\alpha_{L'}$ by stabilization and finite depth quantum circuits
where $L' - L$ is any multiple of some fixed $\ell_0$.
We do this by increasing the number of sites along one coordinate after another.
This is enough for the theorem.

Choose $\ell$ such that it is at least $10^3$ times larger than the range $R$ of $\alpha_L$.
Let $\overline{\alpha(\ell,j)}$ have order $k' < \infty$ 
and let $\overline{\alpha(2\ell,j)}$ have order $k'' < \infty$.
Then, both $\overline{\alpha(\ell,j)}$ and $\overline{\alpha(2\ell,j)}$ 
become trivial at the $k$-th power where $k \le k' k''$.
We will prove the claim with $\ell_0 = k \ell$.

Consider a control space $X = \text{``8''} \times T^{d-1}$.
The figure ``8'' is the $j$-th direction along which there are exactly $2\ell$ sites,
and $T^{d-1}$ is a torus.
For clarity of presentation, 
let us distinguish the one circle that goes around the outer rim of the figure ``8''
from the two disjoint circles in the inner rims.
Let $\overline{\alpha(2\ell,j)^{-1}}$ be defined along the outer rim,
and two $\overline{\alpha(\ell,j)}$ along the inner rims. See Figure~\ref{fig:TIcoherence}(a).

Then, except for the $O(R)$-neighborhood of the middle crossing point of the figure ``8'',
the composed action of the three QCA's is the identity.
Hence, we are left with a TI QCA $\bar \beta$ on the $(d-1)$-torus at the middle point of the figure ``8''.
See Figure~\ref{fig:TIcoherence}(b).
Since $\overline{\alpha(2\ell,j)}^k \cong \Id$ and $\overline{\alpha(\ell,j)}^k \cong \Id$,
we know $\bar \beta^k \cong \Id$.

Now, bring one $\alpha_L$ and $k$ copies of $\alpha_L(\ell,j)$ that are laid adjacent to $\alpha_L$.
See Figure~\ref{fig:TIcoherence}(c).
Wrap the arrangement with $\alpha^{-1}$ on $L^{j-1} \times (L + k\ell) \times L^{d-j}$ sites.
We see that there is $\beta_L$ at each of the $k$ singular points where $(d-1)$-torus is,
which are jointly trivial.
\end{proof}

\begin{corollary}
For any $d = 0,1,2,\ldots$,
every obvious family of $d$-dimensional TI Clifford QCA with prime dimensional qudits contains an infinite coherent subsequence.
(A Clifford QCA is one that maps every Pauli matrix to a tensor product of Pauli matrices.)
Every obvious family of $3$-dimensional TI QCA contains an infinite coherent subsequence.
\end{corollary}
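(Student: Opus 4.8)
The plan is to deduce both assertions directly from Theorem \ref{thm:ti-qca-coherence}: once we know that the dimensional reductions $\overline{\alpha(\ell,j)}$ have finite order in the group of QCA modulo fdqc (at least after their shift content is accounted for), the theorem produces the coherent subsequence $\{\alpha_{L_0+m\ell_0}\}_{m\geq 0}$, which is infinite, giving exactly what the corollary claims. So the whole task reduces to verifying the finite-order hypothesis. The first thing I would record is that dimensional reduction preserves the relevant structure: reducing a TI Clifford QCA in the $j$-th coordinate yields a TI Clifford QCA on the $(d-1)$-torus (with $\ell$ qudits per reduced site), and reducing a $3$-dimensional TI QCA yields a $2$-dimensional TI QCA. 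Thus the two cases ask, respectively, whether $(d-1)$-dimensional TI Clifford QCA and $2$-dimensional TI QCA are torsion modulo fdqc.

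For the Clifford case I would invoke the classification of \cite{haah2019clifford}: over prime-dimensional qudits the group of $n$-dimensional TI Clifford QCA modulo Clifford circuits is, up to the free summand generated by shifts, a torsion group (a sum of Witt-type groups over $\mathbb{F}_p$). Since Clifford circuits are in particular fdqc, torsion modulo Clifford circuits gives torsion modulo fdqc. The one point needing care is the shift (free) part, since a pure shift has infinite order and so $\overline{\alpha(\ell,j)}$ need not literally be torsion. Here I would lean on the figure-``8'' construction in the proof of Theorem \ref{thm:ti-qca-coherence}: the defect QCA $\bar\beta$ at the crossing point is assembled from $\overline{\alpha(2\ell,j)}^{-1}$ on the outer rim and two copies of $\overline{\alpha(\ell,j)}$ on the inner rims, and the GNVW-type flow indices of these pieces cancel because the index is linear in the thickness, giving $-(2\ell)\phi + 2\cdot \ell\phi = 0$. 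Hence $\bar\beta$ lands in the torsion summand, so $\bar\beta^{k}\cong \Id$ for some finite $k$, which is precisely what the construction consumes.

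For the $3$-dimensional case the required input is the analogous two-dimensional statement: every $2$-dimensional TI QCA is a finite depth circuit up to shifts, so that modulo shifts it is trivial, hence torsion. Feeding this into the same shift-cancelling figure-``8'' construction makes the defect $\bar\beta$ a finite depth circuit, i.e. $\bar\beta\cong\Id$, so that the hypothesis of Theorem \ref{thm:ti-qca-coherence} holds with $k=1$ once the free part has been removed. In both cases the theorem then upgrades this to an infinite coherent subsequence indexed by $L_0+m\ell_0$.

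The main obstacle, I expect, is not the combinatorial bookkeeping but the two classification inputs together with making the shift-cancellation honest. The cleanest way is to rephrase the finite-order hypothesis of Theorem \ref{thm:ti-qca-coherence} \emph{modulo shifts} and to check carefully that the assembly of $\bar\beta$ genuinely annihilates the flow index of the reductions; this in turn requires the linearity of the flow index in the slab thickness $\ell$, so that the outer rim exactly cancels the two inner rims. The deepest input is the two-dimensional classification used for the second statement --- namely that there are no nontrivial $2$-dimensional TI QCA beyond shifts and circuits --- whereas the Clifford statement rests on the already-available Witt-group computation of \cite{haah2019clifford}, and the necessity of the prime-dimensional hypothesis enters precisely through that computation.
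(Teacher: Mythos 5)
Your core reduction is exactly the paper's: its entire proof consists of the two classification inputs you name---every TI Clifford QCA over prime-dimensional qudits has order $1$, $2$, or $4$ \cite{haah2019clifford}, and every $2$-dimensional QCA has order $1$ \cite{fh}---fed into Theorem \ref{thm:ti-qca-coherence} after observing that dimensional reduction preserves the Clifford property and drops the spatial dimension by one. Where you genuinely diverge is in refusing to read those classification results as unqualified finite-order statements in the group of QCA modulo fdqc: in their sources they hold modulo circuits \emph{and} shifts, and this matters, since a uniform translation is itself a TI Clifford QCA whose reduction along a transverse direction is a shift of fat sites with GNVW flow $\ell\log p$, hence of infinite order modulo circuits alone---so the hypothesis of Theorem \ref{thm:ti-qca-coherence}, read literally, fails for such mothers even though the corollary claims them. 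Your repair (the outer rim of the figure-``8'' carries $-2\ell$ times the flow density while the two inner rims carry $+\ell$ each, so the defect $\bar\beta$ has vanishing shift content and only the torsion part of the classification is consumed) is precisely what is needed to make the application honest, and you correctly isolate the one lemma it rests on, namely linearity of the shift/flow content of $\overline{\alpha(\ell,j)}$ in the slab thickness $\ell$; this is true and provable from the local formula for the GNVW-type index together with translation invariance in the compactified direction, but it appears nowhere in the paper. In short: the paper's two-sentence proof buys brevity by quoting the classification results without the ``modulo shifts'' qualifier, implicitly assuming the free parts take care of themselves; your version spells out the cancellation that justifies that assumption, at the cost of one extra (true) lemma, and is the more complete argument.
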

\begin{proof}
In any dimension, 
every TI Clifford QCA with prime dimensional qudits has order 1, 2, or 4~\cite{haah2019clifford}.
Every $2$-dimensional QCA has order~1~\cite{fh}.
\end{proof}

Remark:
The above argument applies verbatim to any ``invertible states''~\cite{Kapustin_cobordism,Freed}
that appear in mathematical many-body physics.
One can consider an equivalence relation of the states 
by finite depth quantum circuits where each gate is $G$-symmetric 
for some (possibly trivial) group $G$.
Then, any translation invariant invertible state is coherent
if it becomes finite order upon dimensional reductions 
(compactification along any one direction).
In more physical terms, such a state is automatically 
an entanglement renormalization group fixed point.
Similarly, the construction of section \ref{sec:cf} can be applied to any such invertible state.

\appendix
\section{QCA Are an Abelian Group Modulo Circuits}

Here we recall the proof that QCA are an abelian group modulo circuits\cite{fh,hfh}
. The argument is summarized in Fig.~\ref{figabelian} below.

	\begin{figure}[ht]
		\centering
		\begin{tikzpicture}[scale = 0.8]
		\draw (3, 3) -- (3, -3);
		\node at (3.9,0) {$\equiv$};
		\draw  (0.6,1.7) rectangle (2,1);
		\draw  (0.6,-1) rectangle (2,-1.7);
		\draw (1.3, 3) -- (1.3, 1.7);
		\draw (1.3, 1) -- (1.3, -1);
		\draw (1.3, -1.7) -- (1.3, -3);

		\draw  (4.6,1.7) rectangle (6,1);
		\draw  (4.6,-1) rectangle (6,-1.7);
		\draw (4.9,3) to [out = -90, in = 135] (5.3,1.7);
		\draw (5.7,1) to [out = -45, in = 90] (7.2,-0.6) to [out = -90, in = 90] (5.3,-3);
		\draw (7,3) to [out = -90, in = 45] (6.3, .8);
		\draw (5.9, 0.5) to [out = 225, in = 90] (5,-0.5) to [out = -90, in = 125] (5.2,-1);
		\draw (5.7, -1.7) to [out = -45, in = 155] (5.9, -1.9);
		\draw (6.1, -2.1) to [out = -45, in = 90] (6.8, -3);

		\draw (9, 3) -- (9, 1.7);
		\node at (8.1,0) {$=$};
		\draw  (8.3,1.7) rectangle (9.7,1);
		\draw (9, 1) -- (9, -3);
		\draw (10.7, 3) -- (10.7, -1);
		\draw  (10,-1) rectangle (11.4,-1.7);
		\draw (9, 1) -- (9, -3);
		\draw (10.7, -1.7) -- (10.7, -3);
		\node at (11.7,0.1) {$=$};

		\draw (13.2,3) -- (13.2,0.4);
		\draw (13.2,-0.3) -- (13.2,-3);
		\draw (15,3) -- (15,0.4);
		\draw (15,-0.3) -- (15,-3);
		\draw  (12.5,0.4) rectangle (13.9,-0.3);
		\draw  (14.3,0.4) rectangle (15.7,-0.3);

		\node at (1.3,1.3) {$\alpha_1$};
		\node at (1.3,-1.4) {$\alpha_2$};
		\node at (5.3,1.3) {$\alpha_1$};
		\node at (5.3,-1.4) {$\alpha_2$};
		\node at (9,1.3) {$\alpha_1$};
		\node at (10.7,-1.4) {$\alpha_2$};
		\node at (13.2,0) {$\alpha_1$};
		\node at (15,0) {$\alpha_2$};
		\end{tikzpicture}
		\caption{}
\label{figabelian}
	\end{figure}

	The left vertical line represents $\mathcal{H} = \otimes_i \mathcal{H}$ over $\{x_i\} \subset X$ (that is, $X$ is imagined normal to the figure) and the stacking is composition of QCA, $\alpha_2 \circ \alpha_1$. The line to its right represents doubling the degrees of freedom by introducing for every $\mathcal{H}_i$ at $x_i$ an identical ancilla at the same location. The $\equiv$ sign represents equality in the quotient group $Q$. The crossings (under/over has no significance here) are our notation for the swap operators which at $x_i$ \textit{swaps} the local Hilbert space $\mathcal{H}_i$ with its ancillary partner. Notice that a swap is a fdqc of depth one, so may be added at will without changing the element in $Q$. At the far right we have the manifestly abelian group structure.

\bibliography{group-ref}
\end{document}